%% file: main.tex
\documentclass[runningheads]{llncs}
\usepackage[T1]{fontenc}
\usepackage{graphicx}
\usepackage{tabularx}
\usepackage{array}
\usepackage{siunitx}
\usepackage{makecell}
\usepackage{microtype}
\usepackage{cite}

\usepackage[T1]{fontenc}
\usepackage{graphicx}
\usepackage{subcaption}
\usepackage{hyperref}
\usepackage{listings, listings-rust}
\usepackage{amsmath,amsfonts}
\usepackage{algorithm}
\usepackage{algpseudocode}
\usepackage{booktabs}
\usepackage{makecell,multirow}
\usepackage{hyperref}
\usepackage{tcolorbox}
\usepackage[htt]{hyphenat} 
\usepackage{xspace}
\usepackage{doi}
\usepackage{xcolor}
\usepackage{wrapfig}
\usepackage{enumitem}

\lstdefinestyle{gogo}{
  language=Go,
  basicstyle=\ttfamily\small,
  keywordstyle=\bfseries,
  commentstyle=\itshape,
  stringstyle=\itshape,
  numbers=left,
  numberstyle=\scriptsize,
  stepnumber=1,
  numbersep=8pt,
  showstringspaces=false,
  breaklines=true,
  columns=fullflexible,
  captionpos=b,
  aboveskip=8pt,
  belowskip=8pt,
}

\lstdefinestyle{rusty}{
  language=Rust,
  basicstyle=\ttfamily\small,
  keywordstyle=\bfseries,
  commentstyle=\itshape,
  stringstyle=\itshape,
  numbers=left,
  numberstyle=\scriptsize,
  stepnumber=1,
  numbersep=8pt,
  showstringspaces=false,
  breaklines=true,
  columns=fullflexible,
  captionpos=b,
  aboveskip=8pt,
  belowskip=8pt,
}

\lstdefinelanguage{Proto}{
  morekeywords={
    syntax,import,package,option,message,enum,service,rpc,returns,
    map,oneof,reserved,repeated,optional,bytes,string,int32,int64,
    uint32,uint64,sint32,sint64,fixed32,fixed64,sfixed32,sfixed64,bool,
    float,double
  },
  sensitive=true,
  morecomment=[l]{//},
  morecomment=[s]{/*}{*/},
  morestring=[b]{"},
}

\lstdefinestyle{proto}{
  language=Proto,
  basicstyle=\ttfamily\small,
  keywordstyle=\bfseries,
  commentstyle=\itshape,
  stringstyle=\itshape,
  numbers=left,
  numberstyle=\scriptsize,
  stepnumber=1,
  numbersep=8pt,
  showstringspaces=false,
  breaklines=true,
  columns=fullflexible,
  captionpos=b,
  aboveskip=8pt,
  belowskip=8pt,
}

\definecolor{GoBlue}{RGB}{0,102,204}
\definecolor{RustOrange}{RGB}{204,102,0}

\definecolor{Rust}{RGB}{249,104,21}
\definecolor{Go}{RGB}{78,196,234}
\definecolor{prompt_bg}{RGB}{252,255,221}
\definecolor{prompt_title}{RGB}{0,51,102}

\newif\ifcomments

\commentstrue

\ifcomments
\providecommand{\hlz}[1]{\textbf{\textcolor{blue}{HLZ: #1}}}
\providecommand{\cd}[1]{\textbf{\textcolor{red}{CD: #1}}}
\providecommand{\as}[1]{\textbf{\textcolor{green}{AS: #1}}}

\else
\providecommand{\hlz}[1]{}
\providecommand{\cd}[1]{}
\providecommand{\as}[1]{}
\fi

\input{macro}

\begin{document}
\title{Validated Code Translation for Projects with External Libraries}
%
%

\author{
Hanliang Zhang\inst{1,2} \and
Arindam Sharma\inst{2} \and
Cristina David\inst{2} \and
Meng Wang\inst{2} \and
Brandon Paulsen\inst{1} \and
Daniel Kroening\inst{1} \and
Wenjia Ye\inst{2} \and
Taro Sekiyama\inst{3}
}

\authorrunning{Zhang et al.}

\institute{
Amazon Inc., UK\\
\email{\{hlz,bpaulse,dkr\}@amazon.co.uk}
\and
University of Bristol, UK\\
\email{\{arindam.sharma,cristina.david,meng.wang,wenjia.ye\}@bristol.ac.uk}
\and
National Institute of Informatics, Japan\\
\email{sekiyama@nii.ac.jp}
}

%
%
\maketitle              
\begin{abstract} 

Large Language Models (LLMs) have shown promise for program translation, particularly for migrating systems code to memory-safe languages such as Rust. However, existing approaches struggle when source programs depend on external libraries: LLMs frequently hallucinate non-existent target APIs and fail to generate call-enabling imports; moreover, validating semantic equivalence is challenging when the code manipulates opaque, library-defined types. We present a \emph{translation and validation framework for translating Go projects with external dependencies to Rust}. Our approach combines (i) a retrieval mechanism that maps Go library APIs to Rust APIs, and (ii) a cross-language validation pipeline that establishes language interoperability in the presence of opaque library types by synthesising adapters exclusively from public library APIs, 
prior to validating I/O equivalence. We evaluate our system on six real-world Go repositories with non-trivial external dependencies.
Our approach significantly increases both the compilation and equivalence success rate (up to 100\% in the most dependency-heavy case; approx. 2x on average) by enabling validated translation that manipulate opaque, library-defined types.

\keywords{Code translation \and  Rust \and Go \and Large Language Model.}
\end{abstract}
\input{sections/intro-brandon-1}
\input{sections/overview}
\input{sections/knowledge-base}
\input{sections/validate}
\input{sections/experiment}

\input{sections/related_work}
\input{sections/conclusions}

%
%
%
%
\bibliographystyle{splncs04}
\bibliography{reference-combined}

 \appendix
 \input{sections/api-index-construction}

 \input{sections/appendix}
\end{document}

%% file: macro.tex
\newcommand{\oxidizer}{\textbf{Oxidizer}}
\newcommand{\liboxidizer}{\textbf{CrossCrate}}

\newcommand{\go}[1]{\textcolor{Go}{#1}}
\newcommand{\rust}[1]{\textcolor{Rust}{#1}}

\newcommand{\rustcode}[1]{\texttt{\color{Rust}\small#1}}
\newcommand{\gocode}[1]{\texttt{\color{Go}\small#1}}

\newcommand{\rustcrate}[1]{\textbf{\color{Rust}\small#1}}
\newcommand{\gopackage}[1]{\textbf{\color{Go}\small#1}}

\newcommand{\elem}[1]{\mathit{\MakeLowercase{#1}}}

\newcommand{\golang}{\textsc{Go}\xspace}
\newcommand{\rustlang}{\textsc{Rust}\xspace}
\newcommand{\protobuf}{\textsc{Protobuf}\xspace}
\newcommand{\gocloc}{\textsc{gocloc}\xspace}

\newcommand{\api}{API} 

\newcommand{\itemsmap}{\textit{ApiIndex}}

\newcommand{\age}{\textsc{age}\xspace}
\newcommand{\ageStars}{21.1k\xspace}
\newcommand{\ageForks}{\num{609}\xspace}
\newcommand{\ageLoC}{\num{51}\xspace}

\newcommand{\easyScrypt}{\textsc{easy-scrypt}\xspace}
\newcommand{\easyScryptStars}{\num{25}\xspace}
\newcommand{\easyScryptForks}{\num{3}\xspace}
\newcommand{\easyScryptLoC}{\num{196}\xspace}

\newcommand{\simpleScrypt}{\textsc{simple-scrypt}\xspace}
\newcommand{\simpleScryptStars}{\num{201}\xspace}
\newcommand{\simpleScryptForks}{\num{27}\xspace}
\newcommand{\simpleScryptLoC}{\num{322}\xspace}

\newcommand{\duration}{\textsc{duration}\xspace}
\newcommand{\durationStars}{\num{32}\xspace}
\newcommand{\durationForks}{\num{2}\xspace} 
\newcommand{\durationLoC}{\num{166}\xspace} 

\newcommand{\argonTwoHashing}{\textsc{argon2-hashing}\xspace}
\newcommand{\argonTwoHashingStars}{\num{25}\xspace} 
\newcommand{\argonTwoHashingForks}{\num{4}\xspace}  
\newcommand{\argonTwoHashingLoC}{\num{498}\xspace}  

\newcommand{\pbkdft}{\textsc{pbkdf2}\xspace}
\newcommand{\pbkdftStars}{\num{11}\xspace}
\newcommand{\pbkdftForks}{\num{5}\xspace}
\newcommand{\pbkdftLoC}{\num{270}\xspace}

\newcommand{\pbkdftRAGLoad}{\num{341}\xspace}
\newcommand{\pbkdftIOPercent}{\num{100}\xspace}
\newcommand{\pbkdftIOPercentLibOnly}{\num{100}\xspace}

\newcommand{\pbkdftCompPercent}{\num{100}\xspace}
\newcommand{\pbkdftCompPercentLibOnly}{\num{100}\xspace}

\newcommand{\easyScryptRAGLoad}{\num{1382}\xspace}
\newcommand{\easyScryptIOPercent}{\num{100}\xspace}
\newcommand{\easyScryptIOPercentLibOnly}{\num{100}\xspace}

\newcommand{\easyScryptCompPercent}{\num{100}\xspace}
\newcommand{\easyScryptCompPercentLibOnly}{\num{100}\xspace}

\newcommand{\simpleScryptRAGLoad}{\num{1388}\xspace}
\newcommand{\simpleScryptIOPercent}{\num{100}\xspace}
\newcommand{\simpleScryptIOPercentLibOnly}{\num{100}\xspace}

\newcommand{\simpleScryptCompPercent}{\num{100}\xspace}
\newcommand{\simpleScryptCompPercentLibOnly}{\num{100}\xspace}

\newcommand{\durationRAGLoad}{\num{1341}\xspace}
\newcommand{\durationIOPercent}{\num{100}\xspace}
\newcommand{\durationIOPercentLibOnly}{\num{100}\xspace}

\newcommand{\durationCompPercent}{\num{100}\xspace}
\newcommand{\durationCompPercentLibOnly}{\num{100}\xspace}

\newcommand{\ageRAGLoad}{\num{3543}\xspace}
\newcommand{\ageIOPercent}{\num{80}\xspace}
\newcommand{\ageIOPercentLibOnly}{\num{80}\xspace}
\newcommand{\ageCompPercent}{\num{75}\xspace}
\newcommand{\ageCompPercentLibOnly}{\num{75}\xspace}
\newcommand{\ageNoRAGIOPercent}{\num{0}\xspace}
\newcommand{\ageNoRAGIOPercentLibOnly}{\num{0}\xspace}
\newcommand{\ageNoRAGCompPercent}{\num{0}\xspace}
\newcommand{\ageNoRAGCompPercentLibOnly}{\num{0}\xspace}
\newcommand{\ageRAGNoImportIOPercent}{\num{0}\xspace}
\newcommand{\ageRAGNoImportIOPercentLibOnly}{\num{0}\xspace}
\newcommand{\ageRAGNoImportCompPercent}{\num{0}\xspace}
\newcommand{\ageRAGNoImportCompPercentLibOnly}{\num{0}\xspace}

\newcommand{\argonTwoHashingRAGLoad}{\num{139}\xspace}
\newcommand{\argonTwoHashingIOPercent}{\num{100}\xspace}
\newcommand{\argonTwoHashingIOPercentLibOnly}{\num{100}\xspace}

\newcommand{\argonTwoHashingCompPercent}{\num{100}\xspace}
\newcommand{\argonTwoHashingCompPercentLibOnly}{\num{100}\xspace}

\newcommand{\pbkdftNoRAGIOPercent}{\num{0}\xspace}
\newcommand{\pbkdftNoRAGIOPercentLibOnly}{\num{0}\xspace}
\newcommand{\pbkdftNoRAGCompPercent}{\num{0}\xspace}
\newcommand{\pbkdftNoRAGCompPercentLibOnly}{\num{0}\xspace}

\newcommand{\pbkdftRAGNoImportIOPercent}{\num{0}\xspace}
\newcommand{\pbkdftRAGNoImportIOPercentLibOnly}{\num{0}\xspace}
\newcommand{\pbkdftRAGNoImportCompPercent}{\num{0}\xspace}
\newcommand{\pbkdftRAGNoImportCompPercentLibOnly}{\num{0}\xspace}

\newcommand{\easyScryptNoRAGIOPercent}{\num{33}\xspace}
\newcommand{\easyScryptNoRAGIOPercentLibOnly}{\num{0}\xspace}
\newcommand{\easyScryptNoRAGCompPercent}{\num{33}\xspace}
\newcommand{\easyScryptNoRAGCompPercentLibOnly}{\num{0}\xspace}

\newcommand{\easyScryptRAGNoImportIOPercent}{\num{33}\xspace}
\newcommand{\easyScryptRAGNoImportIOPercentLibOnly}{\num{0}\xspace}
\newcommand{\easyScryptRAGNoImportCompPercent}{\num{33}\xspace}
\newcommand{\easyScryptRAGNoImportCompPercentLibOnly}{\num{0}\xspace}

\newcommand{\simpleScryptNoRAGIOPercent}{\num{60}\xspace}
\newcommand{\simpleScryptNoRAGIOPercentLibOnly}{\num{0}\xspace}
\newcommand{\simpleScryptNoRAGCompPercent}{\num{60}\xspace}
\newcommand{\simpleScryptNoRAGCompPercentLibOnly}{\num{0}\xspace}

\newcommand{\simpleScryptRAGNoImportIOPercent}{\num{60}\xspace}
\newcommand{\simpleScryptRAGNoImportIOPercentLibOnly}{\num{0}\xspace}
\newcommand{\simpleScryptRAGNoImportCompPercent}{\num{60}\xspace}
\newcommand{\simpleScryptRAGNoImportCompPercentLibOnly}{\num{0}\xspace}

\newcommand{\argonTwoHashingNoRAGIOPercent}{\num{80}\xspace}
\newcommand{\argonTwoHashingNoRAGIOPercentLibOnly}{\num{0}\xspace}
\newcommand{\argonTwoHashingNoRAGCompPercent}{\num{80}\xspace}
\newcommand{\argonTwoHashingNoRAGCompPercentLibOnly}{\num{0}\xspace}

\newcommand{\argonTwoHashingRAGNoImportIOPercent}{\num{80}\xspace}
\newcommand{\argonTwoHashingRAGNoImportIOPercentLibOnly}{\num{0}\xspace}
\newcommand{\argonTwoHashingRAGNoImportCompPercent}{\num{80}\xspace}
\newcommand{\argonTwoHashingRAGNoImportCompPercentLibOnly}{\num{0}\xspace}

\newcommand{\durationNoRAGIOPercent}{\num{75}\xspace}
\newcommand{\durationNoRAGIOPercentLibOnly}{\num{0}\xspace}
\newcommand{\durationNoRAGCompPercent}{\num{75}\xspace}
\newcommand{\durationNoRAGCompPercentLibOnly}{\num{0}\xspace}

\newcommand{\durationRAGNoImportIOPercent}{\num{75}\xspace}
\newcommand{\durationRAGNoImportIOPercentLibOnly}{\num{0}\xspace}
\newcommand{\durationRAGNoImportCompPercent}{\num{75}\xspace}
\newcommand{\durationRAGNoImportCompPercentLibOnly}{\num{0}\xspace}

\newcommand{\avgRAGLoad}{\num{1356}}

\newcommand{\avgFullComp}{\num{95.83}}
\newcommand{\avgFullCompLibOnly}{\num{95.83}}

\newcommand{\avgFullIO}{\num{95.83}}
\newcommand{\avgFullIOLibOnly}{\num{95.83}}

\newcommand{\avgNoRAGComp}{\num{41.33}}
\newcommand{\avgNoRAGCompLibOnly}{\num{0}}

\newcommand{\avgNoRAGIO}{\num{41.33}}
\newcommand{\avgNoRAGIOLibOnly}{\num{0}}

\newcommand{\avgNoImportComp}{\num{41.33}}
\newcommand{\avgNoImportCompLibOnly}{\num{0}}

\newcommand{\avgNoImportIO}{\num{41.33}}
\newcommand{\avgNoImportIOLibOnly}{\num{0}}

%% file: sections/intro-brandon-1.tex
\section{Introduction}

Program translation is an important task with many applications, and has
received significant attention from academia~\cite{wang_program_2025,
wang_repotransbench_2024,shiraishi_smartc2rust_2026},
industry~\cite{eniser_towards_2024,ibrahimzada_alphatrans_2025,roziere_unsupervised_2020},
and government
organizations~\cite{noauthor_eliminating_nodate,darpa_tractor_2025}. 
Initial work in this area was done in the context of
compilers~\cite{pnueli_translation_1998,necula_translation_2000}, but
recently source-to-source translation has gained popularity.  Along this
line, many approaches develop rule-based
systems~\cite{noauthor_sharpen_nodate,noauthor_c_nodate,malabarba_mohca-java_1999,buddrus_cappuccinoc_1998,immunant_c2rust_nodate,zhang_ownership_2023,sharp_corrode_nodate,ling_rust_2022,hong_dont_2024,fromherz_compiling_2024,emre_translating_2021,zhang_heterogen_2022,zhan_verified_2024,wang_user-customizable_2023,qiu_tenspiler_2024,kamil_verified_2016,ahmad_automatically_2019},
however these translations are generally unidiomatic, especially for target
languages like Rust~\cite{pan_lost_2024,eniser_towards_2024}.  More recent
approaches have combined LLMs, which generate idiomatic but untrusted
translations, with semantic validation systems, such as differential
testing~\cite{zhang_scalable_2025,shetty_syzygy_2024,zhou_llm-driven_2025,ibrahimzada_alphatrans_2025,abid_gluetest_2024,ke_advancing_2025,bai2025rustassure,li_adversarial_2025},
fuzzing~\cite{eniser_automatically_2024,li_adversarial_2025}, or formal
verification~\cite{yang_vert_2025,bhatia_verified_2024}, to ensure
correctness.  State-of-the-art approaches in code translation translate
whole repositories at a
time~\cite{wang_program_2025,zhang_scalable_2025,ibrahimzada_alphatrans_2025,shiraishi_smartc2rust_2026}.

Despite this progress, prior work in program translation has avoided translating repositories that heavily use external libraries due to two key limitations:

First, LLM performance in code translation degrades when the source code
makes use of external libraries~\cite{gu_effectiveness_2025}.  LLMs
hallucinate non-existent libraries and library APIs in the target language,
and they misuse existing library APIs.

Second, prior techniques often cannot reason about the semantic equivalence
of a source function and its translation when they make use of external
libraries.  Many techniques cannot reason about equivalence of
functions at
all~\cite{xu_optimizing_2025,wang_program_2025,shiraishi_smartc2rust_2026,nitin_c2saferrust_2025,hong_type-migrating_2025,cai_rustmap_2025,zhang_multilingual_2023,pan_lost_2024,dehghan2025translating,wang2025evoc2rust,wang_repotransbench_2024,li_adversarial_2025}
— they rely on end-to-end tests (e.g.  executing a compiled binary like a
command line interface) or hand-written unit tests to validate the
correctness of the translation.  Techniques that \textit{do} attempt to
reason about equivalence of functions make assumptions that often do
not hold when the source function and its translation use external
libraries.  Most techniques assume the input parameter types in the source
function are structurally identical to the input parameter types in the
target
language~\cite{zhang_scalable_2025,yang_vert_2025,shetty_syzygy_2024,zhou_llm-driven_2025,ibrahimzada_alphatrans_2025,eniser_automatically_2024,bhatia_verified_2024,abid_gluetest_2024,ke_advancing_2025}. 
When these input types are defined in external libraries, this assumption
often does not hold.  In addition, all of these works assume that all fields of
complex data types can be accessed directly, which is not possible or
extremely difficult in modern languages that support opaque
types~\cite{zhou_llm-driven_2025}.

\paragraph{Contributions.} In this work, we present, to the best of our knowledge, the first code
translation approach that addresses both of these limitations.  Our approach
is capable of validating the equivalence of pairs of functions with
structurally dissimilar and opaque input parameter types, and uses a RAG
architecture to significantly reduce hallucination of external library APIs. 
We focus the presentation on translating whole Go repositories to Rust due to its practical
relevance~\cite{eniser_automatically_2024} and the fact that Go has a well-defined notion of libraries, but our ideas apply to most
modern languages.  
We evaluated our approach on six real-world Go repositories with non-trivial dependencies, assessing both compilability and differential I/O equivalence. Across all benchmarks, we obtain an average success rate of \(95.83\%\) for both measures. This is an increase of up to 100\% in the most dependency-heavy case and over 2x on average.
%

\section{Motivating Example}
\label{sec:overview}

We illustrate key challenges through an example from the \textbf{age} repository~\cite{age} in Figure~\ref{fig:ed255192curve25519}, showing a Go function and its Rust translation performing cryptographic operations. Our approach relies on an LLM as the primary code generator.

\begin{figure}[htbp]
\begin{minipage}{0.47\textwidth}
\begin{lstlisting}[language=go, basicstyle=\tiny\ttfamily,escapechar=\%, numbers=left, xleftmargin=3em]
import "crypto/sha512"
import "crypto/ed25519"
func ed25519PrivateKeyToCurve25519(
        pk ed25519.PrivateKey) []byte {
    h := sha512.New()
    h.Write(pk.Seed())
    out := h.Sum(nil)
    return out[:curve25519.ScalarSize]
}
%\newline%
\end{lstlisting}
\end{minipage}
\begin{minipage}{0.47\textwidth}
\begin{lstlisting}[language=rust, basicstyle=\tiny\ttfamily, numbers=left, xleftmargin=3em]
use sha2::{Sha512, Digest};
use ed25519_dalek::SigningKey;
pub(crate) fn ed25519_private_key_to_curve25519
        (sk: &SigningKey) -> [u8; 32] {
    let mut hasher = Sha512::new();
    hasher.update(sk.to_bytes());
    let digest = hasher.finalize();
    let mut output = [0u8; 32];
    output.copy_from_slice(&digest[..32]);
    output
}
\end{lstlisting}
\end{minipage}
\caption{Go function from \textbf{age}~\cite{age} and its Rust translation}
\label{fig:ed255192curve25519}
\end{figure}

\paragraph{\textbf{Challenge 1: Cross-language library and \api{} mapping.}}
In Figure~\ref{fig:ed255192curve25519}, on the \golang side, the function 
calls \gocode{sha512.New()} and \gocode{h.Write(pk.Seed())} (Go lines~5--6).
While the intent is straightforward, a translator must still decide
\emph{which} target crates and \api{}s best match the source calls. In particular, a possible Rust translation
constructs a hasher with \rustcode{Sha512::new()} (Rust line~5), feeds bytes derived from the signing key via \rustcode{sk.to\_bytes()} (Rust line~6), and finalises the digest (Rust lines~7--9).
When relying on an LLM’s parametric knowledge, this step is brittle: the model
may select \emph{incorrect} or \emph{non-existent} target crates/\api{}s, leading
to compilation failures (see Section~\ref{sec:evaluation}).

 

\paragraph{Challenge 1.1: Rust import resolution.}
In \rustlang, selecting the right target \api{} is only half the battle: the translation
must also recover the \emph{call-enabling imports} that make the \api{} usable at the call site. We
treat this as a key sub-challenge of Challenge~1.
This is brittle because \rustlang callability is often \emph{non-local}. First, many methods are
\emph{trait-provided} and are only available when the defining trait is in scope. Second, crates
commonly restructure their public surface via \emph{re-exports}, so importable paths often differ from defining modules. As a result, LLMs regularly emit code naming the correct \api{} but failing to compile due to incorrect imports.

Figure~\ref{fig:ed255192curve25519} illustrates both pitfalls. Calling
\rustcode{Sha512::new()} requires importing the \rustcode{Digest} trait
(re-exported as \rustcode{sha2::Digest}), while \rustcode{SigningKey} is defined
in a private module but exposed at the crate root. For this example, the
translator must therefore recover the correct imports
\rustcode{sha2::\{Sha512, Digest\}} and \rustcode{ed25519\_dalek::SigningKey}
(Rust lines~1--2).

\begin{figure}[t]
\centering
\begin{minipage}{0.45\textwidth}
\begin{lstlisting}[language=go, basicstyle=\tiny\ttfamily,escapechar=\%]
import "crypto/rsa"
import "golang.org/x/crypto/ssh"

type RSAIdentity struct {
	k      *rsa.PrivateKey
	sshKey ssh.PublicKey
}
\end{lstlisting}
\caption{\golang type \gocode{RSAIdentity}}
\label{fig:go-rsaidentity}
\end{minipage}
\hfill
\begin{minipage}{0.45\textwidth}
\begin{lstlisting}[language=rust, basicstyle=\tiny\ttfamily,escapechar=\%]
use rsa::RsaPrivateKey;
use russh::keys::PublicKey;

pub struct RsaIdentity {
    pub(crate) k: RsaPrivateKey,
    pub(crate) ssh_key: PublicKey,
}
\end{lstlisting}
\caption{Translated \rustlang type \rustcode{RsaIdentity}}
\label{fig:rust-rsaidentity}
\end{minipage}
\label{fig:rsaidentity-pair}
\end{figure}

\paragraph{\textbf{Challenge 2: Cross-language validation with opaque library types.}} 

Even if a translator resolves the right crates and \api{}s
(Challenge~1) and produces \emph{callable} Rust code with correct imports
(Challenge~1.1), the translation still comes with \emph{no evidence
of equivalence}. Validating repository-level translations requires \emph{cross-language comparison}, yet Go and Rust share neither a common runtime nor value representation.

In principle, one would like to \emph{at least} validate a translated function by executing the
Go source on unit-test inputs, then checking that the Rust translation produces
matching outputs. However, this
seemingly straightforward scheme breaks down in practice because values cannot
be exchanged directly between Go and Rust, and the relevant inputs/outputs
frequently involve \emph{library-defined types} (e.g., \gocode{ed25519.PrivateKey}
and \rustcode{SigningKey}). Such types are often \emph{opaque} (private fields),
lack stable (de)serialization interfaces, and can only be constructed or
inspected through carefully chosen public \api{}s. As a result, validation cannot
rely on field-level conversion or structural matching; it requires synthesising
\emph{behaviour-preserving adapters} using only public library \api{}s.

This difficulty becomes even more pronounced for user-defined wrappers such as
\gocode{RSAIdentity} (Figure~\ref{fig:go-rsaidentity}) and its Rust translation
\rustcode{RsaIdentity} (Figure~\ref{fig:rust-rsaidentity}), which encapsulate
cryptographic key types drawn from different ecosystems (e.g.,
\gocode{rsa.PrivateKey}, Figure~\ref{fig:go-rsa-privatekey}, versus
\rustcode{rsa::RsaPrivateKey}). Here, even constructing a single corresponding
test input requires navigating interface types, trait-scoped methods, and
crate-specific encodings, making cross-language differential testing very challenging.
In particular, synthesising transformations between \gocode{rsa.PrivateKey},
\gocode{ssh.PublicKey}, and their \rustlang counterparts cannot rely on structural,
field-by-field conversion: \gocode{ssh.PublicKey} is an interface, while
\gocode{rsa.PrivateKey} contains nested library types (e.g., \gocode{big.Int} and
\gocode{PrecomputedValues}, Figure~\ref{fig:go-rsa-privatekey}) whose values must
satisfy library-specific invariants. As a result, correct interoperation typically
requires using the libraries' public constructors and (de)serialization routines,
rather than directly accessing or reconstructing internal fields.

\begin{figure}[t]
\centering
\begin{minipage}{0.48\textwidth}
\centering
\begin{lstlisting}[language=go, basicstyle=\tiny\ttfamily, escapechar=\%]
type PrivateKey struct {
    PublicKey            // public part.
    D         *big.Int   // private exponent
    Primes    []*big.Int // prime factors of N,
                         // has >= 2 elements.

    // Precomputed contains precomputed values that
    // speed up RSA operations, if available. It
    // must not be modified.
    Precomputed PrecomputedValues
}
\end{lstlisting}
\caption{Definition of \gocode{rsa.PrivateKey}.}
\label{fig:go-rsa-privatekey}
\end{minipage}
\hfill
\begin{minipage}{0.48\textwidth}
\centering
\begin{tcolorbox}[
    colback=prompt_bg,
    colframe=prompt_title,
    subtitle style={boxrule=0.4pt, colback=yellow!50!blue!25!white, colupper=black},
]
\scriptsize
...\\
- Translation of library \api{}s\\
You are advised to use the following \api{}s from the crate \rustcrate{crate}\\
1. \$\{api\_1.import\_paths\}\$: \$\{api\_1.signature\}\$, \$\{api\_1.docs\}\$\\
2. \$\{api\_2.import\_paths\}\$: \$\{api\_2.signature\}\$, \$\{api\_2.docs\}\$\\
...\\
\end{tcolorbox}
\caption{RAG-enhanced context in the translation prompts.}
\label{fig:prompt-extend-translation}
\end{minipage}
\end{figure}

%% file: sections/overview.tex
\section{High-Level Approach}
\label{sec:approach}

At a high level, our pipeline performs \emph{modular, function-level translation}:
each type definition and function is translated in isolation by an LLM, guided by a compact
\emph{dependency summary} capturing the in-scope types and the signatures of
direct callees. This design is in the spirit of prior modular LLM-based translation approaches such as
\oxidizer~\cite{zhang_scalable_2025}.



\subsection{Stage 1: Cross-Language Library Mapping (Challenge 1)}

As discussed in Section \ref{sec:overview}, LLMs often fail to resolve
\emph{which} \rustlang crate and \api{} best matches a given \golang call. Even
for standard-library usage, the correct counterpart may live in a third-party
Rust crate rather than in \rustcode{std}, and multiple candidate crates may
provide overlapping functionality.

To mitigate this, we augment each translation prompt with explicit \emph{library-resolution context}.
Specifically, for every Go function we (i) extract the set of Go library APIs it invokes, and
(ii) retrieve a small set of semantically relevant \emph{candidate Rust APIs}. 
These candidates are obtained via a retrieval-augmented pipeline briefly described below  (and in more detail in Section~\ref{sec:knowledge-base}): 

\paragraph{Crate matching.}
First, we map each Go package \gopackage{pkg} to a corresponding Rust crate \rustcrate{crate}. We assume an \emph{injective} mapping: each Go package maps to a single Rust crate, and all APIs from \gopackage{pkg} are resolved within \rustcrate{crate}. This keeps retrieval focused and prompt budgets manageable. While this can be relaxed to allow multiple crates per package, we found injectivity to be a useful practical approximation.

We obtain candidate crates by combining (i) LLM-proposed matches based on package-level
documentation summaries and (ii) keyword-based search over \texttt{crates.io}, followed by
documentation-based reranking, with mappings pre-computed before the main translation loop. For the running example in
Figure~\ref{fig:ed255192curve25519}, this yields \rustcrate{sha2} for \gocode{crypto/sha512} and
\rustcrate{ed25519\_dalek} for \gocode{crypto/ed25519}. 

\paragraph{Per-crate knowledge base.}
For each matched target crate \rustcrate{crate}, we precompute a crate-specific knowledge base
$\mathcal{K}_\rustcrate{crate}$ containing its public APIs.
For each library API used in the source function, we then query $\mathcal{K}_\rustcrate{crate}$
using a textual description that combines the Go API name and its documentation. For example,
for \gocode{sha512.New} we construct:
$\text{Query}_{\text{\gocode{sha512.New}}} = \text{\gocode{sha512.New}:}$
``{\it New returns a new hash.Hash computing the SHA-512 checksum.}'',
and retrieve candidates from $\mathcal{K}_\rustcrate{sha2}$. In our system,
\rustcode{Sha512::new} is ranked highest.




\paragraph{Rust import resolution.}
Given that, as explained in Section~\ref{sec:overview}, import resolution is non-trivial in Rust,
we extend our crate-specific knowledge base $\mathcal{K}_\rustcrate{crate}$ to
capture not only \emph{what} APIs exist, but also \emph{how} they can be imported and used.
Concretely, each entry in $\mathcal{K}_\rustcrate{crate}$ stores a publicly usable \api{} together with
its documentation and one or more \emph{valid, public import paths}. 
%
In our running example, the top match for \gocode{sha512.New} is \rustcode{Sha512::new}, returned
with the call-enabling imports \rustcode{sha2::\{Sha512, Digest\}}, enabling the Rust translation
in Figure~\ref{fig:ed255192curve25519}. Likewise, the Go call \gocode{pk.Seed()} (Go line~6) is
resolved to \rustcode{SigningKey::to\_bytes()} (Rust line~6) via
$\mathcal{K}_\rustcrate{ed25519\_dalek}$, using the correct root-level import path
\rustcode{ed25519\_dalek::SigningKey}.

\subsection{Stage 2: Generating the Translation}

\input{sections/translation}

\input{sections/overview-validation}

%% file: sections/translation.tex
As aforementioned, with the library knowledge bases in place, we extend the 
standard \emph{dependency summary}---which captures the
in-scope types and the signatures of direct callees--- in the translation prompt with a small set of relevant
\rustlang \api{} suggestions.
When translating a given \golang item, if the item invokes a \golang \api{}
\gocode{pkg.f} from package \gopackage{pkg}, we query the precomputed knowledge
base \(\mathcal{K}_{\rustcrate{crate}}\) associated with the mapped target crate
\rustcrate{crate}.
The query \(\text{Query}_{\gocode{pkg.f}}\) combines the documentation and type
signature of \gocode{pkg.f}, and we retrieve the top-ranked \api{} candidates
(three in our implementation).
Each suggested \api{} is injected into the prompt together with its
\emph{call-enabling import paths}, type signature, and documentation
(\autoref{fig:prompt-extend-translation}).

%% file: sections/overview-validation.tex
\begin{figure}[t]
\centering
\begin{minipage}{0.45\textwidth}
\begin{lstlisting}[language=proto, basicstyle=\tiny\ttfamily,escapechar=\%]
message RsaIdentity {
  optional bytes k = 1;
  bytes ssh_key    = 2;
}
\end{lstlisting}
\caption{\protobuf schema ($T_{\mathrm{Proto}}$) for types \gocode{RSAIdentity} (Figure~\ref{fig:go-rsaidentity}) and \rustcode{RsaIdentity}
(Figure~\ref{fig:rust-rsaidentity})}
\label{fig:proto-idf}
\end{minipage}
\hfill
\begin{minipage}{0.45\textwidth}
\begin{lstlisting}[language=go, basicstyle=\tiny\ttfamily,escapechar=\%]
type ProtoRSAIdentity struct {
    K      []byte
    SshKey []byte
}
\end{lstlisting}
\caption{Flattened \golang carrier type produced by the \protobuf compiler ($T_{\mathrm{Proto}}^{\go\golang}$)}
\label{fig:flat-go-type}
\end{minipage}
\end{figure}

\begin{figure}[t]
\centering
\begin{minipage}{\textwidth}
\begin{lstlisting}[language=go, basicstyle=\tiny\ttfamily,escapechar=\%]
func ToProtoRSAIdentity(input *RSAIdentity) (ProtoRSAIdentity, error) {
	if input == nil {
		return ProtoRSAIdentity{}, errors.New("input RSAIdentity is nil")
	}

	proto := ProtoRSAIdentity{}

	// Convert RSA private key to PKCS#1 DER format
	if input.k != nil {
		proto.K = x509.MarshalPKCS1PrivateKey(input.k)
	}

	// Convert SSH public key to wire format
	if input.sshKey != nil {
		proto.SshKey = input.sshKey.Marshal()
	}

	return proto, nil
}
\end{lstlisting}
\caption{From original \golang type to flattened \golang carrier type}
\label{fig:toproto-go}
\end{minipage}
\end{figure}

\begin{figure}[t]
    \centering
\begin{minipage}{\textwidth}
\begin{lstlisting}[language=rust, basicstyle=\tiny\ttfamily,escapechar=\%]
use russh::keys::PublicKey;
use rsa::pkcs1::DecodeRsaPrivateKey;
fn from_proto_rsa_identity(proto: &ProtoRSAIdentity) -> Result<RSAIdentity> {
    let k = proto.k.as_ref()
        .ok_or_else(|| anyhow::anyhow!("Missing RSA private key in ProtoRSAIdentity"))?;
    let k = RsaPrivateKey::from_pkcs1_der(k)
        .context("Failed to parse RSA private key from PKCS#1 DER")?;

    let ssh_key = PublicKey::from_bytes(&proto.ssh_key)
        .context("Failed to parse SSH public key")?;

    Ok(RSAIdentity {
        k,
        ssh_key,
    })
}
\end{lstlisting}
\caption{From original \rustlang type to flattened \rustlang carrier type}
\label{fig:fromproto-rust}
\end{minipage}
\end{figure}

\subsection{Cross-Language Validation with Opaque Library Types (Challenge 2)}
\label{sec:validation}


To enable end-to-end validation, we adopt a \textbf{\emph{two-stage}} pipeline.
First, we establish \textbf{\emph{cross-language interoperability}} between the source and target languages 
by synthesising adapters that allow values to cross the Go--Rust boundary.
Second, once cross-language execution becomes possible, we validate
\textbf{\emph{I/O equivalence}} between each Go function and its Rust translation
via differential testing.  


\paragraph{Cross-language interoperability via a schema-backed carrier representation.}
Our key idea is to bridge Go and Rust via a \emph{shared, language-neutral} intermediate data
format (IDF) specified as a single \protobuf schema $T_{\mathrm{Proto}}$~\cite{protobuf}. We
synthesise this schema with an LLM from the source Go type definition: for example, given
\gocode{RSAIdentity} (Figure~\ref{fig:go-rsaidentity}), the model generates the IDF schema in
Figure~\ref{fig:proto-idf}.

From $T_{\mathrm{Proto}}$, the \protobuf compiler generates \emph{language-specific carrier types}
$T_{\mathrm{Proto}}^{\go\golang}$ and $T_{\mathrm{Proto}}^{\rust\rustlang}$, together with
(de)serialization code that maps these carriers to and from the shared wire format. Concretely,
$T_{\mathrm{Proto}}^{\go\golang}$ is a Go struct type (e.g., \gocode{ProtoRSAIdentity} in
Figure~\ref{fig:flat-go-type}), while $T_{\mathrm{Proto}}^{\rust\rustlang}$ is the corresponding Rust
type generated from the same schema. These carrier types are isomorphic by construction
and provide a common representation on which we can build cross-language adapters.

%


This reduces cross-language interoperability to synthesising two pairs of conversions around the
shared carrier representation. On the source side, we require adapters that map between the Go
types used by the original program and the Go carrier type $T_{\mathrm{Proto}}^{\go\golang}$; on the
target side, we analogously require adapters between the Rust carrier type
$T_{\mathrm{Proto}}^{\rust\rustlang}$ and the Rust types used by the translation. In both cases, the
carrier acts as the common ``meeting point'' through which values can be transported across the
Go--Rust boundary via \protobuf serialisation.

Crucially, these conversions cannot be implemented via field-level access. Realistic code frequently
wraps library-defined objects whose internal representations are opaque or constrained by
invariants enforced by the library. For example, \gocode{RSAIdentity}
(Figure~\ref{fig:go-rsaidentity}) and its translation \rustcode{RsaIdentity}
(Figure~\ref{fig:rust-rsaidentity}) encapsulate cryptographic key types such as
\gocode{ssh.PublicKey} and \rustcode{russh::keys::PublicKey}, which can only be constructed and
observed through restricted public APIs rather than by directly inspecting or populating fields.

\begin{figure}[t]
\centering
\begin{minipage}{0.45\textwidth}
\begin{lstlisting}[language=go, basicstyle=\tiny\ttfamily,escapechar=\%]
import "crypto/rsa"

type RSAKeyPair struct {
    // user-defined wrapper field
    identity RSAIdentity   
    // library-defined type field
    pubKey *rsa.PublicKey  
}
\end{lstlisting}
\caption{Composite \golang type \gocode{RSAKeyPair}}
\label{fig:go-rsakeypair}
\end{minipage}
\hfill
\begin{minipage}{0.45\textwidth}
\begin{lstlisting}[language=rust, basicstyle=\tiny\ttfamily,escapechar=\%]
use rsa::RsaPublicKey;

pub struct RsaKeyPair {
    pub(crate) identity: RSAIdentity,
    pub(crate) pub_key: RsaPublicKey,
}
\end{lstlisting}
\caption{Translated \rustlang type \rustcode{RsaKeyPair}}
\label{fig:rust-rsakeypair}
\end{minipage}
\label{fig:rsarecipient-pair}
\end{figure}
\begin{figure}[t]
\centering
\begin{minipage}{\textwidth}
\begin{lstlisting}[language=rust, basicstyle=\tiny\ttfamily,escapechar=\%, numbers=left, xleftmargin=4em]
fn from_proto_rsa_key_pair(proto: &ProtoRSAKeyPair) -> Result<RSAKeyPair> {
    let identity = proto.identity.as_ref()
        .ok_or_else(|| anyhow::anyhow!("Missing RSA identity in ProtoRSAKeyPair"))?;
    let identity = from_proto_rsa_identity(identity)
        .context("Failed to parse RSA identity")?;

    let pub_key_der = proto.pub_key.as_ref()
        .ok_or_else(|| anyhow::anyhow!("Missing RSA public key in ProtoRSAKeyPair"))?;
    let pub_key = RsaPublicKey::from_pkcs1_der(pub_key_der)
        .context("Failed to parse RSA public key from PKCS#1 DER")?;

    Ok(RSAKeyPair {
        identity,
        pub_key,
    })
}
\end{lstlisting}
\caption{From flattened \protobuf type to \rustlang type \rustcode{RSAKeyPair}}
\label{fig:rust-from-proto-rsakeypair}

\end{minipage}
\label{fig:rsarecipient-pair}
\end{figure}

\paragraph{Synthesising adapters using public APIs.}
Given the carrier types generated from $T_{\mathrm{Proto}}$, we synthesise the required
high-level adapters by querying our previously constructed library knowledge bases for relevant \emph{public} \api{}s and constraining the LLM to assemble conversions exclusively from these APIs. This is essential for
opaque library types, where direct structural conversion is impossible and correct construction must
go through library-provided routines. For example, the Go-to-carrier adapter
(Figure~\ref{fig:toproto-go}) serialises keys using \gocode{x509.MarshalPKCS1PrivateKey} and the
interface method \gocode{ssh.PublicKey.Marshal}. On the Rust side, the carrier-to-Rust adapter
(Figure~\ref{fig:fromproto-rust}) reconstructs keys via crate-specific decoding routines (e.g.,
PKCS\#1 decoding and public-key parsing).

\paragraph{Handling complex types.}
In practice, values passed across the boundary can be arbitrarily structured, combining multiple
components---including nested user-defined wrappers and opaque library-defined objects. For
instance, \gocode{RSAKeyPair} and its Rust translation \rustcode{RsaKeyPair}
(Figures~\ref{fig:go-rsakeypair} and~\ref{fig:rust-rsakeypair}) embed both wrapper-level fields and
library types. To scale to such cases, we generate carrier schemas and adapters modularly, reusing
previously validated conversions for each component. For example, in
Figure~\ref{fig:rust-from-proto-rsakeypair} (line~4), the carrier-to-Rust adapter for
\rustcode{RsaKeyPair} reuses the carrier-level conversion for \rustcode{RsaIdentity} from
Figure~\ref{fig:fromproto-rust}.



\paragraph{Schema and carriers validation.}
Since both the carrier schema (Figure~\ref{fig:proto-idf}) and the adapters
(Figures~\ref{fig:toproto-go} and~\ref{fig:fromproto-rust}) are LLM-synthesised, we validate them
via \emph{round-trip} checks before using them for I/O equivalence testing. Concretely, on the Go
side we require the forward/backward adapters $f_{\mathrm{Go}\rightarrow \mathrm{Proto}}$ and
$f_{\mathrm{Proto}\rightarrow \mathrm{Go}}$ (between the original \golang types and $T_\mathrm{Proto}$) to satisfy $f_{\mathrm{Proto}\rightarrow \mathrm{Go}}(f_{\mathrm{Go}\rightarrow \mathrm{Proto}}(v))=v$ on representative
values $v$ (e.g., obtained from unit-test executions), providing evidence that the carrier preserves
enough information to reconstruct the original Go value. Once Go $\leftrightarrow$ Proto roundtripping succeeds, we validate full cross-language interoperability by composing the one-sided adapters into end-to-end transformations between the source and target types: \(f_{\mathrm{Rust}\rightarrow \mathrm{Go}}(f_{\mathrm{Go}\rightarrow \mathrm{Rust}}(v))=v\). Passing these checks is a necessary precondition for transporting inputs and outputs across the language boundary, after which we can meaningfully test end-to-end I/O equivalence. 

On round-trip checks failure, we regenerate adapters while keeping the translation fixed. If retries exceed a predefined budget, we regenerate the translation and revalidate.



\paragraph{Differential I/O validation.}
Once type interoperability is established for a function’s input and output types, we reuse the
validated adapters to execute the Go source and Rust translation on the same unit-test inputs (mapped through the
carrier representation), and compare their observable outputs. This enables
repository-level differential testing even when translated code manipulates
library-defined types that would otherwise be inexpressible across the language
boundary.

If I/O validation fails, we freeze the translated types, function signatures, and
adapters, and regenerate only the function body---up to a fixed budget---until the
validation succeeds or the budget is exhausted.

%% file: sections/knowledge-base.tex
\section{Crate-Scoped Knowledge Base Construction}
\label{sec:knowledge-base}

Our RAG pipeline relies on \emph{crate-scoped} retrieval. Given a target Rust crate, we construct a
queryable knowledge base that captures its publicly usable APIs together with their \emph{valid
import paths}, enabling precise API retrieval during translation. 
This section describes the
construction of this knowledge base.




\subsection{Indexing \api{}s for a Crate}
\label{sec:per-crate-kb}

For each target crate \rustcrate{crate} in the pre-computed
library mapping,
we
construct a crate-specific knowledge base $\mathcal{K}_\rustcrate{crate}$ to support retrieving
relevant Rust APIs from natural-language queries derived from Go documentation and signatures.

Formally, for each crate we first construct an inventory
mapping every API item to its documentation and a set of call-enabling import paths:
\begin{eqnarray*}
\itemsmap &:& \text{API} \rightarrow \text{Doc} \times \mathcal{P}(\text{ImportPath}),
\end{eqnarray*}
We then
expose this inventory through a queryable knowledge base $\mathcal{K}$ that maps a
textual query to a ranked list of API entries\footnote{Here $\mathcal{S}(\cdot)$ denotes
a total ranking over the inventory entries.}:
$\mathcal{K} : \text{String} \rightarrow \mathcal{S}(\itemsmap)$.

\paragraph{\textbf{Structural representation of the \rustlang library documentation.}}
Our $\itemsmap$ construction algorithm operates over a lightweight structural
representation of the Rust crate documentation, inspired by the \textit{rustdoc-json}
format and formalised in~\autoref{fig:rustdoc}. The goal of this representation
is to expose, in a uniform way, the information needed to enumerate a crate's
public API surface and derive the set of valid import paths that may refer to
each API item.

As shown in~\autoref{fig:rustdoc}, a Rust crate consists of a single root module
\(\elem{M}\). Each module stores: (i) module-level documentation (\text{Doc});
(ii) a list of items \(\overline{\elem{I}}\); and (iii) a list of
re-export declarations of the form \rustcode{pub use}~\(\elem{P}\). Re-exports
are essential for $\itemsmap$ because they expose items under additional public
paths, which must be considered when mapping names in natural-language
documentation to concrete API references.

Each item \(\elem{I}\) is either a submodule, a type \(\elem{t}\), or a trait
\(\mathbb{T}\), and every item carries its own documentation. Traits
(\(\mathbb{T}\)) define a set of method signatures \(\overline{\elem{F}}\),
while types (\(\elem{T}\)) are associated with a set of \rustcode{impl} blocks
(\(\overline{\text{ImplBlock}}\)) that provide methods, including trait
implementations. In~\autoref{fig:rustdoc} we represent an \text{ImplBlock} as a
trait implementation \(\rustcode{impl}\ \mathbb{T}\ \rustcode{for}\ \elem{T}\)
together with its method set \(\overline{\elem{F}}\); although the definition is
simplified, our implementation handles all Rust \rustcode{impl} forms in
practice. Finally, paths \(\elem{P}\) encode fully-qualified names of the form
\(\elem{M}(::\elem{N})^*::\elem{I}\), which serve as the canonical import paths
used by $\itemsmap$.


\begin{figure}
\centering
\begin{minipage}{0.48\linewidth}
\[
\begin{array}{llcl}
  \text{Module} & \elem{M} & :=   & \text{Doc} \times \overline{\elem{I}} \times \overline{\text{\rustcode{pub use} } \elem{P}} \\
  \text{Item} & \elem{I} & := & \elem{T}\ |\ \mathbb{T}\ |\ \elem{f} \\
  \text{Type} & \elem{T} & := & \text{Doc} \times \overline{\text{ImplBlock}}\\
  \text{Trait} & \mathbb{T} & := & \text{Doc} \times \overline{\elem{f}}
\end{array}
\]
\end{minipage}
\hfill
\begin{minipage}{0.48\linewidth}
\[
\begin{array}{llcl}
  \text{ImplBlock} & & := & \rustcode{impl}\ \mathbb{T}\ \rustcode{for}\ \elem{T} \times \overline{\elem{f}}\\
  \text{Function} & \elem{f} & := & \text{Doc}\\
  \text{Path} & \elem{P} & := & \elem{M}(::\elem{M})^*::\elem{I}
\end{array}
\]
\end{minipage}

\caption{Structural representation of Rust library documentation.}
\label{fig:rustdoc}
\end{figure}

\paragraph{\rustlang $\itemsmap$ construction.}
We make use of the structural representation of the \rustlang library documentation
to build $\itemsmap$ for a Rust crate. For this purpose, we systematically enumerate its publicly usable API surface
together with \emph{call-enabling} import paths. This requires accounting for deep module
hierarchies, \rustcode{pub use} re-exports (including re-exports from transitive dependencies), and
trait-scoped method callability. Our algorithm therefore traverses the public module tree, resolves
both internal and external re-exports using precomputed inventories for dependencies, and extracts
types, traits, and methods. For trait methods, we initially record placeholder trait references and
later rewrite them to concrete import paths in a final revision pass. We show that all generated
paths are valid under Rust visibility rules. Full details, including algorithms and proofs, appear in
Appendix~\ref{app:itemsmap-rust}.

\paragraph{\golang $\itemsmap$ construction.}

Notably, when synthesising the glue code between \golang and \rustlang (see
Section~\ref{sec:interop}), we additionally require knowledge bases for
\golang libraries.
Constructing $\itemsmap$ for \golang is straightforward given its simple type and package system---referencing an \api{} requires only the library name and \api{} name (e.g., \gocode{ssh.PublicKey} in~\autoref{fig:go-rsaidentity}). This can be achieved via webpage scraping or similar approaches.  In this work we parse the
output of the \gocode{go doc} command, which provides comprehensive library documentation.

\subsection{Constructing the Knowledge Base}
\label{sec:map-api}

From the per-crate inventory $\itemsmap$, we build a knowledge base $\mathcal{K}$ for retrieving \emph{relevant} \api{} candidates from natural-language queries. Concretely, $\mathcal{K}$ ranks entries by relevance to a query $Q$ (e.g., derived from Go documentation and signatures) and returns top matches with their call-enabling import paths.

\paragraph{Two-stage retrieval and reranking.}
Directly applying an LLM reranker over the full \itemsmap{} is infeasible for
real-world crates, which may expose thousands of \api{}s. Instead, we adopt a
standard two-stage information retrieval pipeline: an efficient first-stage
retriever followed by a more accurate reranker.

Given a query \(Q\), we first apply a dense bi-encoder retriever in the style of
Dense Passage Retrieval (DPR)~\cite{karpukhin_dense_2020} to score \api{} entries by
semantic similarity between \(Q\) and their documentation, and select a small
candidate set (top-\(k\), with \(k=35\) in our implementation beyond which we observed diminishing returns in recall). We then rerank these candidates using a stronger cross-encoder model in the style of
BERT-based reranking~\cite{nogueira_passage_2019}. Finally, we apply an LLM-based
list-wise reranker to obtain the final ordering~\cite{pradeep_rankzephyr_2023}.

%% file: sections/validate.tex
\section{Validating the Translation}
\label{sec:translation-validation}

In this section, we develop a validation framework for checking equivalence of function translations in the presence of library dependencies. 
As discussed in Section~\ref{sec:overview}, validating Go-to-Rust translations remains challenging even when the generated code compiles, because the two languages have incompatible value representations and pervasive library-defined types. To support end-to-end validation, we employ a \textbf{\emph{two-stage}} pipeline: we first establish \textbf{\emph{cross-language interoperability}} 
by synthesising adapters that safely serialize values across the Go--Rust boundary, and then leverage these validated adapters to check \textbf{\emph{I/O equivalence}} between each Go function and its Rust translation via differential testing on unit-test derived inputs.

\subsection{Cross-Language Interoperability}
\label{sec:interop}

Our goal is to enable cross-language execution for validation: given a Go value of a type
$T_{\go\golang}$, we must construct a
corresponding Rust value of the translated type $T_{\rust\rustlang}$, and conversely recover a
Go value back. We model this as synthesising a pair of \emph{cross-language adapters}
\[
    \mathcal{A}_{T} : T_{\go\golang} \rightarrow T_{\rust\rustlang} \qquad
    \mathcal{A}^{-}_{T} : T_{\rust\rustlang} \rightarrow T_{\go\golang},
\]
where $\mathcal{A}_{T}$ lifts Go unit-test inputs/outputs into Rust for downstream differential
testing, and $\mathcal{A}^{-}_{T}$ supports round-trip validation. In particular, the composition
$\mathcal{A}^{-}_{T} \circ \mathcal{A}_{T}$ should preserve observable Go values, providing evidence
that (i) the translated Rust type does not discard semantically relevant information and
(ii) the adapters are behaviour-preserving on realistic inputs.

Directly generating $\mathcal{A}_{T}$ and $\mathcal{A}^{-}_{T}$ is difficult because Go and Rust values
have incompatible representations, and library-defined types are often opaque (private fields)
and only manipulable via public APIs. To address this, we
introduce a shared \emph{intermediate representation} based on \protobuf, $T_{\mathrm{Proto}}$, which
serves as a language-neutral carrier for values crossing the Go--Rust boundary.



\paragraph{\textbf{\protobuf schema generation.}}
We next describe how the carrier schema $T_{\mathrm{Proto}}$ is synthesised from
a composite source Go type $T_{\go\golang}$

A key distinction is between \emph{language- or library-defined} composite types and
 \emph{user-defined} ones. With respect to the former, given a composite type \(T_{\go\golang}\), for each one of its fields the LLM selects an appropriate \protobuf
encoding. Primitive Go types (e.g., \gocode{int32}, \gocode{string}, \gocode{bool}) are mapped directly to their corresponding \protobuf scalar types. In contrast, complex or opaque fields---particularly library-defined types with private internal representations---are conservatively represented as byte arrays. This byte-array fallback ensures that values remain transmissible across the language boundary even when their internal structure is inaccessible or lacks a natural \protobuf analogue.


For user-defined composite types with nested user-defined dependencies, we
generate schemas compositionally in dependency order.
The LLM 
incrementally constructs
schemas for types that reference previously processed ones.
When generating a schema for a type \(T\) that refers to another user-defined
type \(T'\), we instruct the LLM to reuse the existing \protobuf definition of
\(T'\), ensuring consistency across schemas and avoiding redundant
declarations. 

This structural exposure of user-defined composites is crucial for later stages: it enables the synthesis of glue code that composes field-level conversions and systematically reuses
existing library APIs when mapping between \golang, \rustlang, and the carrier
representation.
By contrast, library-defined types---whose internal structure is often opaque to
clients---are treated conservatively and handled via API-based adapters rather
than field-level access.




\paragraph{\textbf{High-level glue code generation.}}
From the \protobuf schema, the \protobuf compiler generates \emph{language-specific carrier types}
$T_{\mathrm{Proto}}^{\go\golang}$ and $T_{\mathrm{Proto}}^{\rust\rustlang}$, together with low-level serialization and deserialization routines that encode values into (and decode values from) the shared wire format.
These compiler-generated types and routines provide a lightweight interoperability layer, avoiding explicit FFI manipulation. 
At a conceptual level, the schema \(T_{\mathrm{Proto}}\) defines a language-neutral carrier that is realised concretely in both \golang and \rustlang.
We therefore treat \(T_{\mathrm{Proto}}\) as a type that \emph{exists in both
languages}, and reduce the interoperability problem to synthesising
\emph{high-level glue code} that connects \(T_{\mathrm{Proto}}\) to the source
type \(T_{\go\golang}\) and the translated type \(T_{\rust\rustlang}\).

With $T_{\mathrm{Proto}}$ fixed, cross-language interoperability reduces to generating \emph{language-local} glue code converting between each $T_{\mathrm{Lang}}$ and $T_{\mathrm{Proto}}$:
\[
    \mathcal{A}_{T_{\mathrm{Lang}}}  :  T_{\mathrm{Lang}} \rightarrow T_{\mathrm{Proto}} \qquad
    \mathcal{A}^{-}_{T_{\mathrm{Lang}}}  :  T_{\mathrm{Proto}} \rightarrow T_{\mathrm{Lang}},
\]
where $\mathrm{Lang} \in \{\golang,\ \rustlang\}$. These adapters bridge rich, language-specific data structures and the shared \protobuf encoding.

By construction, $T_{\mathrm{Proto}}$ is lightweight, containing only \protobuf-native
primitives like scalars, enumerations, byte arrays, and shallow aggregates.
In contrast, $T_{\mathrm{Lang}}$ types may be deeply nested, include user-defined structures, and depend on library types accessible only through public APIs.

We synthesise the glue functions using an LLM, prompted with the definitions of
$T_{\mathrm{Lang}}$ and $T_{\mathrm{Proto}}$. For composite types, we instruct the model to generate glue
code compositionally, reusing existing conversions whenever possible and relying on retrieved
library guidance when necessary:
\begin{enumerate}
    \item \textbf{Reuse of existing adapters.} If $T_{\mathrm{Proto}}$ (or $T_{\mathrm{Lang}}$) depends on a
    nested type $U$, we reuse previously generated adapters for $U$ to avoid duplicating conversion
    logic and to maintain consistency across the repository.

    \item \textbf{Knowledge-guided library conversions.} If $T_{\mathrm{Lang}}$ contains a type from library \textit{lib}, we augment the prompt with APIs retrieved from the corresponding
    knowledge base $\mathcal{K}_{\textit{lib}}(\text{Query}_{T_{\mathrm{Lang}}\leftrightarrow T_{\mathrm{Proto}}})$.
    The query $\text{Query}_{T_{\mathrm{Lang}}\leftrightarrow T_{\mathrm{Proto}}}$ includes the definitions of
    $T_{\mathrm{Lang}}$ and $T_{\mathrm{Proto}}$ to steer the model towards call-enabled constructors,
    accessors, and (de)serialization routines.
\end{enumerate}

Despite this guidance, LLM-generated glue code remains fallible: the model may select
inappropriate APIs, introduce type mismatches, or produce code that compiles but violates semantics at runtime. We therefore validate adapters via round-trip checking before downstream equivalence testing, as described next.

\paragraph{\textbf{Round-tripping validation.}}
Since the carrier schema $T_{\mathrm{Proto}}$, the translated Rust type $T_{\rust\rustlang}$, and the
glue functions $\mathcal{A}_{T_{\mathrm{Lang}}}$ and $\mathcal{A}^{-}_{T_{\mathrm{Lang}}}$ are all synthesised by an
LLM, we explicitly validate that they preserve information before using them for downstream
equivalence checking. Taking the source type $T_{\go\golang}$ as ground truth, we perform
validation in two phases: we first validate Go $\leftrightarrow$ Proto conversions, and only then
validate full Go $\leftrightarrow$ Rust interoperability.

\paragraph{\golang-side roundtripping.}
In the first phase, we validate that Go-side glue code roundtrips values through the carrier representation without loss:
\begin{definition}[\golang-side round-trip property]\label{def:go-side-roundtrip}
Given a \golang type $T_{\go\golang}$ and transformations
$\mathcal{A}_{T_{\go\golang}} : T_{\go\golang} \rightarrow T_{\mathrm{Proto}}$ and
$\mathcal{A}^{-}_{T_{\go\golang}} : T_{\mathrm{Proto}} \rightarrow T_{\go\golang}$,
the following property holds:
\[
\forall v.\quad \mathcal{A}^{-}_{T_{\go\golang}}\!\bigl(\mathcal{A}_{T_{\go\golang}}(v)\bigr)=v.
\]
\end{definition}
Satisfying Definition~\ref{def:go-side-roundtrip} implies that the forward embedding
$\mathcal{A}_{T_{\go\golang}}$ is injective on the tested values, providing evidence that the
generated schema $T_{\mathrm{Proto}}$ captures all semantically relevant information present in
$T_{\go\golang}$ (i.e., no loss or unintended normalisation is introduced by the carrier).

Establishing this universally quantified property is infeasible in general, 
so we approximate it using concrete values $\overline{v}$ extracted from existing unit tests. A failure indicates
either (i) an inadequate carrier schema or (ii) incorrect Go-side glue code, and triggers regeneration
of \emph{both} $T_{\mathrm{Proto}}$ and the Go $\leftrightarrow$ Proto adapters.

\paragraph{Full roundtripping.}
Once Go $\leftrightarrow$ Proto roundtripping succeeds, we validate full cross-language
interoperability by composing the one-sided adapters into end-to-end transformations between the
source and target types:
\[
\mathcal{A}_{T} = \mathcal{A}^{-}_{T_{\rust\rustlang}} \circ \mathcal{A}_{T_{\go\golang}}
\qquad\text{and}\qquad
\mathcal{A}^{-}_{T} = \mathcal{A}^{-}_{T_{\go\golang}} \circ \mathcal{A}_{T_{\rust\rustlang}}.
\]
We then validate the corresponding round-trip property:
\begin{definition}[Full round-trip property]\label{def:full-roundtrip}
Given a \golang type $T_{\go\golang}$ and transformations
$\mathcal{A}_{T} : T_{\go\golang} \rightarrow T_{\rust\rustlang}$ and
$\mathcal{A}^{-}_{T} : T_{\rust\rustlang} \rightarrow T_{\go\golang}$, the following property holds:
\[
\forall v.\quad \mathcal{A}^{-}_{T}\!\bigl(\mathcal{A}_{T}(v)\bigr)=v.
\]
\end{definition}
As before, we approximate Definition~\ref{def:full-roundtrip} by testing it on a finite set of values. Since the first phase has already established
the correctness of Go $\leftrightarrow$ Proto conversions, failures in this phase can be attributed to
the Rust-side artefacts (the translated type $T_{\rust\rustlang}$ or the Rust $\leftrightarrow$ Proto glue).
Accordingly, when the full round-trip check fails, we regenerate only the Rust-side adapters,
avoiding unnecessary recomputation of $T_{\mathrm{Proto}}$ and the validated Go-side glue code.

\subsection{Validation of I/O Equivalence}

With interoperability established (Section~\ref{sec:interop}), we validate \emph{observable behaviour} via differential testing. Given a Go function $f$ and its LLM-generated Rust translation $g$:
\[
    f: I_{\go\golang} \rightarrow O_{\go\golang} \qquad g: I_{\rust\rustlang} \rightarrow O_{\rust\rustlang}.
\]
For simplicity, we assume single input and output types throughout this section.
Since inputs and outputs may involve
opaque library-defined types, we validate I/O equivalence by comparing executions via the adapters from
Section~\ref{sec:interop}, which transport values across the Go--Rust boundary:

\begin{definition}[I/O Equivalence]\label{def:io-equiv}
Given a \golang function $f$ and the \rustlang translation $g$, as well as 
the transformations generated in the previous step,
\[
    \mathcal{A}_I :  I_{\go\golang} \rightarrow I_{\rust\rustlang} \qquad
    \mathcal{A}_O :  O_{\go\golang} \rightarrow O_{\rust\rustlang}
\]
then $f$ and $g$ are I/O equivalent if $\forall v.\mathcal{A}_O\bigl(f(v)\bigr) = g\bigl(\mathcal{A}_I(v)\bigr)$.
\end{definition}



\paragraph{I/O equivalence on observed inputs.}
We assume the source repository includes unit tests, whose executions yield concrete I/O examples
$\overline{(i,o)}$ for $f$. We treat these examples as a behavioural specification and validate I/O
equivalence by checking that $g$ matches $f$ on the observed inputs. Concretely, letting
$\overline{i}$ denote the set of Go inputs extracted from unit-test runs, we report I/O equivalence
for $(f,g)$ when Definition~\ref{def:io-equiv} holds for every $v \in \overline{i}$ (for $\overline{i}$ the inputs in $\overline{(i,o)}$).

\paragraph{Interpreting the guarantee.}
Our checks do not establish full semantic equivalence between $f$ and $g$; rather, they provide evidence that translation preserves behaviour on observed inputs. To interpret this at the Go-level, we apply the inverse adapter $\mathcal{A}^{-}_{O}$ to the Rust output, yielding:

\begin{lemma}[Go-level behavioural agreement on observed inputs]\label{prop:io-equiv-observed}
Assume the full round-trip property (Definition~\ref{def:full-roundtrip}) holds for the output type $O$ on all values in
$\{f(v)\mid v\in\overline{i}\}$, and assume I/O equivalence holds for all $v\in\overline{i}$.
Then for every $v\in\overline{i}$,
\[
f(v) \;=\; \mathcal{A}^{-}_{O}\!\left(g\left(\mathcal{A}_I(v)\right)\right),
\]
i.e., on the observed inputs, $f$ produces the same outputs as the function
$\mathcal{A}^{-}_{O}\circ g\circ \mathcal{A}_I$.
\end{lemma}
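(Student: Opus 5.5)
The argument is a direct equational rewriting: we combine the two hypotheses pointwise for a fixed observed input. Fix an arbitrary $v\in\overline{i}$ and write $o = f(v)\in O_{\go\golang}$. Note that $o$ lies in the set $\{f(v')\mid v'\in\overline{i}\}$ on which the full round-trip property for $O$ is assumed. Here $\mathcal{A}_O$ and $\mathcal{A}^{-}_O$ denote the composed cross-language adapters $\mathcal{A}_{T}$ and $\mathcal{A}^{-}_{T}$ of Section~\ref{sec:interop}, instantiated at $T=O$.

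The steps are, in order, as follows.

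First, instantiate Definition~\ref{def:full-roundtrip} at $T=O$ and the value $o$, which gives $\mathcal{A}^{-}_{O}\bigl(\mathcal{A}_{O}(o)\bigr)=o$.

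Second, instantiate Definition~\ref{def:io-equiv} at $v$, which gives $\mathcal{A}_O\bigl(f(v)\bigr)=g\bigl(\mathcal{A}_I(v)\bigr)$.

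Third, apply the (deterministic) function $\mathcal{A}^{-}_{O}$ to both sides of the second equation. This yields $\mathcal{A}^{-}_{O}\bigl(\mathcal{A}_O(f(v))\bigr)=\mathcal{A}^{-}_{O}\bigl(g(\mathcal{A}_I(v))\bigr)$.

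Fourth, rewrite the left-hand side using the first step, obtaining $f(v)=\mathcal{A}^{-}_{O}\bigl(g(\mathcal{A}_I(v))\bigr)$.

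Since $v$ was arbitrary, the claim holds for all of $\overline{i}$. The reformulation in terms of $\mathcal{A}^{-}_{O}\circ g\circ\mathcal{A}_I$ is then just the definition of function composition.

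No step is mathematically deep. The only point requiring care is to make the implicit assumptions explicit. The adapters and $g$ must be treated as (total, deterministic) functions on the relevant values, so that applying $\mathcal{A}^{-}_O$ to equal arguments gives equal results. Equality on $O_{\rust\rustlang}$ used in Definition~\ref{def:io-equiv} must be the same notion as the one preserved by function application. In practice adapters may fail, returning errors, or be nondeterministic. I would state that the lemma is read under the convention that these functions are defined on the values involved, which is guaranteed because the round-trip checks and the I/O checks succeeded on exactly these values.
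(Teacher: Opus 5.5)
Your proof is correct and follows the paper's argument essentially verbatim. You instantiate I/O equivalence at $v$, apply $\mathcal{A}^{-}_{O}$ to both sides, and cancel using the full round-trip property on $f(v)$; your caveat about the adapters being defined on the relevant values matches the paper's own remark that the round-trip hypothesis guarantees this.
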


\begin{proof}
Fix any $v \in \overline{i}$. By I/O equivalence (Definition~\ref{def:io-equiv}) we have
$\mathcal{A}_{O}\bigl(f(v)\bigr) \;=\; g\bigl(\mathcal{A}_I(v)\bigr)$.
By the full round-trip property for outputs (Definition~\ref{def:full-roundtrip}), $\mathcal{A}_O$ and
$\mathcal{A}^{-}_{O}$ are defined on $f(v)$ and satisfy
$\mathcal{A}^{-}_{O}(\mathcal{A}_{O}(f(v))) = f(v)$. Applying $\mathcal{A}^{-}_{O}$ to both sides of the
above equation yields $f(v) \;=\; \mathcal{A}^{-}_{O}\!\left(g\left(\mathcal{A}_I(v)\right)\right)$.
Since $v$ was arbitrary, the claim holds for all $v \in \overline{i}$.
\end{proof}

Proposition~\ref{prop:io-equiv-observed} implies the Rust implementation can replace the Go function on observed inputs, even when they involve opaque library types.

%% file: sections/experiment.tex
\section{Experimental Evaluation}
\label{sec:evaluation}


\begin{table}[t]
\centering
\small
\setlength{\tabcolsep}{1pt}
\renewcommand{\arraystretch}{1.1}
\caption{Benchmark description. LoC excludes comments and blank lines (\gocloc~\cite{gocloc} \texttt{code} count). The penultimate column reports the total LoC in external functions invoked by the project, counted once per function. The final column shows the number of distinct library APIs used in the project.}
\resizebox{\columnwidth}{!}{%
\begin{tabular}{r r r r r r}
\toprule
\textbf{Benchmark} & {\makecell{\textbf{Dependency}\\\textbf{Domain}}} & \textbf{Stars/Forks} &
{\makecell{\textbf{\go{Go} LOC}\\\textbf{(no lib)}}} &
{\makecell{\textbf{\go{Go} LOC}\\\textbf{(+ lib)}}} &
{\makecell{\textbf{External}\\ \textbf{Library}\\ \textbf{Items}}} \\
\midrule
\pbkdft~\cite{pbkdft} 
  & Key Derivation 
  & \pbkdftStars/\pbkdftForks 
  & \pbkdftLoC & 2996 & 29 \\

\easyScrypt~\cite{easyscrypt}
  & Crypto 
  & \easyScryptStars/\easyScryptForks 
  & \easyScryptLoC & 5484 & 29 \\

\simpleScrypt~\cite{simplescrypt} 
  & Crypto 
  & \simpleScryptStars/\simpleScryptForks 
  & \simpleScryptLoC & 10517 & 10 \\

\argonTwoHashing~\cite{argon2hashing} 
  & Password 
  & \argonTwoHashingStars/\argonTwoHashingForks 
  & \argonTwoHashingLoC & 6485 & 9 \\

\duration~\cite{duration} 
  & Time Parsing 
  & \durationStars/\durationForks 
  & \durationLoC & 12041 & 9 \\
  
\age~\cite{age} 
  & Crypto 
  & \ageStars/\ageForks 
  & \ageLoC & 4103 & 38 \\

\bottomrule
\end{tabular}
}
\label{tab:benchmarks}
\end{table}

\begin{table}[t]
\centering
\small
\setlength{\tabcolsep}{1.3pt}
\renewcommand{\arraystretch}{1.1}
\caption{Translation results. RAG load is the total number of retrievable API entries across all Rust crates involved in a translation task, computed as the sum of the sizes of the per-crate knowledge bases constructed, and  approximates the size of the RAG candidate search space.}
\resizebox{\columnwidth}{!}{%
\begin{tabular}{c c cc cc cc}
\toprule
& & \multicolumn{2}{c}{\textbf{Full}} & \multicolumn{2}{c}{\textbf{No RAG}} & \multicolumn{2}{c}{\textbf{No imports}} \\
\cmidrule(lr){3-4}\cmidrule(lr){5-6}\cmidrule(lr){7-8}
 & {\makecell{\textbf{RAG} }} 
& {\textbf{Comp.} } & {\textbf{Equiv.}}
& {\textbf{Comp.}} & { \textbf{Equiv.}}
& {\textbf{Comp.} } & { \textbf{Equiv.}} \\
\textbf{Benchmark} & {\makecell{ \textbf{load}}} 
& {\textbf{full/dep}} & { \textbf{full/dep}}
& {\textbf{full/dep}} & { \textbf{full/dep}}
& {\textbf{full/dep}} & { \textbf{full/dep}} \\
\midrule

\pbkdft~\cite{pbkdft}
  & \pbkdftRAGLoad
  & \pbkdftCompPercent/\pbkdftCompPercentLibOnly & \pbkdftIOPercent/\pbkdftIOPercentLibOnly
  & \pbkdftNoRAGCompPercent/\pbkdftNoRAGCompPercentLibOnly & \pbkdftNoRAGIOPercent/\pbkdftNoRAGIOPercentLibOnly
  & \pbkdftRAGNoImportCompPercent/\pbkdftRAGNoImportCompPercentLibOnly & \pbkdftRAGNoImportIOPercent/\pbkdftRAGNoImportIOPercentLibOnly \\

\easyScrypt~\cite{easyscrypt}
  & \easyScryptRAGLoad
  & \easyScryptCompPercent/\easyScryptCompPercentLibOnly
  & \easyScryptIOPercent/\easyScryptIOPercentLibOnly
  & \easyScryptNoRAGCompPercent/\easyScryptNoRAGCompPercentLibOnly
  & \easyScryptNoRAGIOPercent/\easyScryptNoRAGIOPercentLibOnly
  & \easyScryptRAGNoImportCompPercent/\easyScryptRAGNoImportCompPercentLibOnly
  & \easyScryptRAGNoImportIOPercent/\easyScryptRAGNoImportIOPercentLibOnly \\

\simpleScrypt~\cite{simplescrypt}
  & \simpleScryptRAGLoad
  & \simpleScryptCompPercent/\simpleScryptCompPercentLibOnly
  & \simpleScryptIOPercent/\simpleScryptIOPercentLibOnly
  & \simpleScryptNoRAGCompPercent/\simpleScryptNoRAGCompPercentLibOnly
  & \simpleScryptNoRAGIOPercent/\simpleScryptNoRAGIOPercentLibOnly
  & \simpleScryptRAGNoImportCompPercent/\simpleScryptRAGNoImportCompPercentLibOnly
  & \simpleScryptRAGNoImportIOPercent/\simpleScryptRAGNoImportIOPercentLibOnly \\

\argonTwoHashing~\cite{argon2hashing}
  & \argonTwoHashingRAGLoad
  & \argonTwoHashingCompPercent/\argonTwoHashingCompPercentLibOnly
  & \argonTwoHashingIOPercent/\argonTwoHashingIOPercentLibOnly
  & \argonTwoHashingNoRAGCompPercent/\argonTwoHashingNoRAGCompPercentLibOnly
  & \argonTwoHashingNoRAGIOPercent/\argonTwoHashingNoRAGIOPercentLibOnly
  & \argonTwoHashingRAGNoImportCompPercent/\argonTwoHashingRAGNoImportCompPercentLibOnly
  & \argonTwoHashingRAGNoImportIOPercent/\argonTwoHashingRAGNoImportIOPercentLibOnly \\

\duration~\cite{duration}
  & \durationRAGLoad
  & \durationCompPercent/\durationCompPercentLibOnly
  & \durationIOPercent/\durationIOPercentLibOnly
  & \durationNoRAGCompPercent/\durationNoRAGCompPercentLibOnly
  & \durationNoRAGIOPercent/\durationNoRAGIOPercentLibOnly
  & \durationRAGNoImportCompPercent/\durationRAGNoImportCompPercentLibOnly
  & \durationRAGNoImportIOPercent/\durationRAGNoImportIOPercentLibOnly \\
  
\age~\cite{age}
  & \ageRAGLoad
  & \ageCompPercent/\ageCompPercentLibOnly
  & \ageIOPercent/\ageIOPercentLibOnly
  & \ageNoRAGCompPercent/\ageNoRAGCompPercentLibOnly
  & \ageNoRAGIOPercent/\ageNoRAGIOPercentLibOnly
  & \ageRAGNoImportCompPercent/\ageRAGNoImportCompPercentLibOnly
  & \ageRAGNoImportIOPercent/\ageRAGNoImportIOPercentLibOnly \\

\midrule
\textbf{Average}
  & \avgRAGLoad
  & \avgFullComp/\avgFullCompLibOnly
  & \avgFullIO/\avgFullIOLibOnly
  & \avgNoRAGComp/\avgNoRAGCompLibOnly
  & \avgNoRAGIO/\avgNoRAGIOLibOnly
  & \avgNoImportComp/\avgNoImportCompLibOnly
  & \avgNoImportIO/\avgNoImportIOLibOnly \\
  
\bottomrule
\end{tabular}
}
\label{tab:benchmarkResults}
\end{table}

In this section, we present our results for the following research questions.
\begin{itemize}
    \item \textbf{RQ1: (Overall Effectiveness)} Can our approach translate entire projects that make significant use of external libraries? Can our approach validate functional equivalence of functions using opaque, external library types as arguments?
    \item \textbf{RQ2: (Ablation)} How much do our RAG and Rust import resolution approaches contribute to translation success?
    \item \textbf{RQ3: (Baseline)} How does our approach compare against \oxidizer{}~\cite{zhang_scalable_2025}?
\end{itemize}

\subsection{Experimental Setup}
\textbf{Implementation.}
We build our approach on top of the tool \oxidizer~\cite{zhang_scalable_2025}, a state-of-the-art code translation tool. Our modifications add an additional 1580, 1245, and 2585 lines of Go, Rust, and Python to \oxidizer's implementation. Our new implementation, called \liboxidizer, is used the same way as \oxidizer: it takes as input a Go project and several hyperparameters that control retry budgets, and it outputs a Rust translation of the original project. 

\textbf{LLMs \& Hyperparameters.}
We use the same hyperparameter settings as in the original \oxidizer{}
work~\cite{zhang_scalable_2025} for all experiments.  We use Claude Sonnet
3.5~\cite{anthropic_claude35_2024} as the LLM for all experiments, which
performs similarly to other state-of-the-art LLMs in coding tasks, thus we
believe our approach would work with other LLMs as well.  To support
reproducibility, our implementation also supports replaying logged LLM
responses without querying a real LLM.

\textbf{Benchmarks.}
We evaluate our approach on a curated set of real-world \go{Go} benchmarks
drawn from GitHub. Our selection criteria are as follows:
(1)~the benchmarks expose non-trivial library APIs, resulting in a large
candidate search space for retrieval-augmented generation;
(2)~they exercise \emph{distinct dependency domains}, including cryptographic
key derivation, memory-hard password hashing, and time and duration parsing;
and (3)~they are actively maintained and used in production systems.
%
This diversity of dependency domains enables us to study how our approach
behaves across fundamentally different API and dependency structures,
ranging from security-critical, parameter-rich cryptographic libraries to
domain-specific parsing libraries with intricate control flow and type-level
constraints.



Concretely, our benchmark suite consists of the following libraries.
Table~\ref{tab:benchmarks} summarises the benchmarks, their dependency domains,
and popularity (stars and forks).

\begin{itemize}
    \item~\pbkdft~\cite{pbkdft}:
    a cryptographic key-derivation library implementing the \texttt{PBKDF2} algorithm, used to derive secure keys from passwords via salts and iteration counts. This benchmark exercises standard cryptographic APIs and parameterised security primitives.

    \item~\easyScrypt~\cite{easyscrypt} and \simpleScrypt~\cite{simplescrypt}:
    memory-hard key-derivation libraries providing \texttt{scrypt}-based password hashing. 
    These benchmarks stress translation of security-critical code with non-trivial parameter validation, memory-cost configuration, and performance-sensitive APIs.

    \item~\argonTwoHashing~\cite{argon2hashing}:
    a password-hashing library based on the \texttt{Argon2} algorithm representing modern, memory-hard password hashing with structured parameter types and strong API-level constraints.

    \item~\duration~\cite{duration}:
    a time-processing library for parsing and formatting human-readable duration strings exercising domain-specific parsing logic and interactions between numeric, temporal, and string-processing APIs.

    \item~\age~\cite{age}:
     a simple, modern, and secure file encryption tool featuring small explicit keys and post-quantum support. We extracted functions and types with complex library dependencies from this library, while dropping fragments with features unsupported by \oxidizer{} (e.g., involving file I/O).
\end{itemize}

\subsection{Results}

We run \liboxidizer{} on each of our benchmarks, and report results in Table~\ref{tab:benchmarkResults} under the column \textbf{Full}. We report two key metrics: compilation rate under the column \textbf{\%~Comp.} and I/O equivalence rate under the column \textbf{\%~Equiv.}. Both of these metrics are computed as the percentage of functions that are compilable or equivalent. We report these metrics for all functions (\textbf{full}), and for the subset of functions that use library dependencies (\textbf{dep}).

\paragraph{RQ1 (Overall Effectiveness).}
The results in Table~\ref{tab:benchmarkResults} show that \liboxidizer{} indeed can effectively translate projects with external libraries, and can validate functional equivalence of functions that use external library types as arguments. \liboxidizer{} successfully translates and validates 100\% of the functions in all of our benchmarks.

\paragraph{RQ2 (Ablation).}
Next, we assess the usefulness of both the knowledge base and the import resolution in our RAG approach. We run \liboxidizer{} in two modes: one with both the knowledge base look-ups and import resolution disabled, and one with just knowledge base look-ups enabled. The results are shown in columns \textbf{No RAG}, and \textbf{No Imports}, respectively, in Table~\ref{tab:benchmarkResults}. The results show that, without both  components of our RAG approach, the LLM fails to produce compiling (and equivalent code) on functions with library dependencies, and the overall compilation and equivalence rates drop significantly. These findings demonstrate that import resolution is a necessary component in our pipeline, and that API mapping without call-enabling imports does not suffice in practice.

\paragraph{RQ3 (Baseline).} 
\oxidizer~hard-codes target crates for translation; however, this provides limited help with complex libraries---performing identically to the \textbf{No RAG} ablation and achieving \(0\%\) compilation and equivalence success on functions with external \api{}s.

%% file: sections/related_work.tex
\section{Related Work}
\textbf{Code Translation Validation.}
Code translation validation was originally studied in the context of compilers~\cite{necula_translation_2000,pnueli_translation_1998}, and only recently has received significant attention in our domain~\cite{szafraniec_code_2023,roziere_leveraging_2022,roziere_unsupervised_2020,pan_lost_2024}, namely source-to-source translation. Early works proposed rule-based translation systems~\cite{noauthor_sharpen_nodate,noauthor_c_nodate,malabarba_mohca-java_1999,buddrus_cappuccinoc_1998}, many focused on C to Rust~\cite{immunant_c2rust_nodate,zhang_ownership_2023,sharp_corrode_nodate,ling_rust_2022,hong_dont_2024,fromherz_compiling_2024,emre_translating_2021}, which are theoretically correct by construction, but generally produce unidiomatic translations or target specialized domains~\cite{zhang_heterogen_2022,zhan_verified_2024,wang_user-customizable_2023,qiu_tenspiler_2024,kamil_verified_2016,ahmad_automatically_2019}. More recent work uses LLMs to propose a candidate translation, which is then combined with an equivalence validation system. Validation us done at one of two-levels: system-level~\cite{li_adversarial_2025} or function-level~\cite{zhang_scalable_2025,yang_vert_2025,shetty_syzygy_2024,zhou_llm-driven_2025,ibrahimzada_alphatrans_2025,eniser_automatically_2024,bhatia_verified_2024,abid_gluetest_2024,ke_advancing_2025,bai2025rustassure}. System level validation is done by executing a textual interface, like a command-line interface. Works on system-level validation assume textual interfaces for the source and translation are identical, and thus do not address the challenges we solve, moreover they cannot validate library translations that do not provide textual interfaces. Works on function-level generally assume function input types are structurally similar. SACTOR~\cite{zhou_llm-driven_2025} is the only work that does not make this assumption, however their approach does not handle opaque types, and depends on hand-written rules specific to C and Rust.
\\
\textbf{Unvalidated Code Translation.}
Many works in code translation focus only on generating the best candidate translation, but do not propose automated validation systems~\cite{xu_optimizing_2025,wang_program_2025,shiraishi_smartc2rust_2026,nitin_c2saferrust_2025,hong_type-migrating_2025,cai_rustmap_2025,zhang_multilingual_2023,pan_lost_2024,dehghan2025translating,wang2025evoc2rust,wang_repotransbench_2024}. These works instead use hand-written tests to evaluate the correctness of the translation. They are orthogonal to our work.
\textbf{LLM-Based Code Generation with External Libraries}
A recent study shows that LLM performance in code translation degrades when the source code makes heavy use of external libraries~\cite{gu_effectiveness_2025}. A parallel work to ours proposes a library mapping approach similar to ours, which uses documentation to find equivalent external libraries in the target language~\cite{ke_advancing_2025}, though they do not address the challenges specific to Rust, like we do. LLM-based code generation with external libraries has been studied outside the context of code translation as well~\cite{du_evaluating_2024}, however they are not attempting to build a library mapping.
\\
\textbf{Cross-Language Differential Testing and Verification.}
Much work has been done on differential testing and verification in domains besides code translation~\cite{petsios_nezha_2017,palikareva_shadow_2016,noller_hydiff_2020,nilizadeh_diffuzz_2019,garzella_leveraging_2020}, however they assume the interfaces of the two entities under test are identical, so they do not address one of our core challenges.

%% file: sections/conclusions.tex
\section{Conclusions}
\label{sec:conclusions}

We proposed a library-aware, RAG-guided framework for Go to Rust
translation.  It maps external Go APIs to Rust using a curated knowledge
base of public, crate-root-reachable items, then validates results in two
steps: cross-language round trips followed by differential I/O.  On projects
with external dependencies, this improved both compilation and equivalence
success rate (approximate 2x on average) by enabling translation for
functions with external dependencies.

%% file: sections/api-index-construction.tex
\section{An $\itemsmap$ Construction Algorithm for \rustlang}
\label{app:itemsmap-rust}

We first formalize the notions of valid import path and re-exports.
\paragraph{\textbf{Paths.}}
A path  $\elem{P}$ is a fully-qualified name used to reference an \api{} item within a
crate. Intuitively, a path begins at the crate root module and traverses zero
or more nested submodules before terminating at an item  $\elem{i}$.
However, not every syntactically well-formed path is a \emph{valid import path}.
In Rust, each module has an associated visibility, and items contained within a
private module are not accessible to external clients. Consequently, a path is
invalid whenever it passes through a module that is not visible from the crate
root under Rust's privacy rules.

\begin{definition}[Valid import path]
Let \(m_0\) be the root module of a crate. A~syntactic path
\[
p = m_0 :: m_1 :: \cdots :: m_k :: i
\]
is a \textbf{valid import path} if every intermediate module on the path is
publicly visible, i.e.
\[
\mathsf{Valid}(p) {:=} \forall j \in \{1,\ldots,k\}.\ \mathsf{Public}(m_j).
\]
\end{definition}

\paragraph{\textbf{Re-exports.}}
Rust additionally allows developers to expose items through alternative paths
using re-exports. A re-export statement \rustcode{pub use} \(p\) introduces an
alias for the item referenced by \(p\), making that item accessible from the
re-exporting module under a new path. Re-exports are widely used to define clean
public APIs that hide internal module hierarchies, and to surface selected items
from transitive dependencies as part of a crate's public interface.

\begin{definition}[Re-export resolution]
Let \(m\) be a module and let \(p\) be a path. If \(p\) resolves to an item \(i\),
written \(p \Downarrow i\), then a re-export declaration
\(\rustcode{pub use}\ p\) in \(m\) induces an alias path \(m::i'\) such that
\[
(p \Downarrow i) \;\Rightarrow\; (m::i' \leadsto i).
\]
\end{definition}

\paragraph{Constructing $\itemsmap$.}
For example, in~\autoref{fig:ed255192curve25519} the \api{} \rustcode{SigningKey}
is defined in the internal module \rustcode{ed25519\_dalek::signing}. Since
\rustcode{signing} is private, the defining path is not a valid import path for
crate users. The crate therefore re-exports \rustcode{SigningKey} at the root,
making \rustcode{ed25519\_dalek::SigningKey} a valid import path. Re-exports are
also commonly used to propagate APIs from dependencies. For instance,
\(\rustcode{Sha512::new}\) is available through the trait \rustcode{sha2::Digest},
which is defined in the transitive dependency \rustcrate{digest}.

To construct the $\itemsmap$ for a crate, we must also analyse its transitive dependencies, as \rustcode{pub use} re-exports can expose dependency-defined \api{}s as part of the crate's public interface. As shown in~\autoref{alg:main-entry}, we compute $\itemsmap$s for dependencies in post-order, ensuring they are available before processing the current crate. These precomputed maps are passed into the extraction of the target crate's $\itemsmap$, which is returned as the final result.

The procedure for computing a per-crate $\itemsmap$ is detailed in~\autoref{alg:extract-main}. The main entry point, \textsc{ComputeItemsMap}, first invokes \textsc{traverseModule} to construct an initial $\itemsmap$ that may contain placeholder import paths for method \api{}s. It then performs a revision pass to resolve these placeholders into concrete import paths, yielding a fully resolved $\itemsmap$.

\paragraph{Sub-routine for traversing a crate}
The recursive subroutine \textsc{traverseModule} enumerates all publicly
reachable modules and items within a crate. It maintains the current module path
prefix \(\Pi\), representing the sequence of enclosing modules from the crate
root to the current module, and uses this prefix to construct import paths for
each discovered \api{}.

For an external re-export of the form
\(\rustcode{pub use}\ n::\ldots::i\) where \(n \neq m\) and \(m\) denotes the
current crate, the root \(n\) refers to a crate in the transitive dependency
set. In this case, we retrieve the \api{} entry for \(i\) from the precomputed
\textit{CratesMap} and record it under the re-exported path \(\Pi'::i\),
reflecting its publicly exposed location in the current crate.

For an internal re-export of the form \(\rustcode{pub use}\ m::\ldots::i\), where the referenced path originates from
the current crate root, we treat \(i\) as locally defined within the current
module. This allows internal re-exports to be processed uniformly with
module-local public items during traversal.


Finally, for each public item in the current module---together with the items
introduced via internal re-exports---we either extract its associated
documentation (for types and traits) or, if the item is itself a submodule,
recursively invoke \textsc{traverseModule} to continue the traversal.

\paragraph{Sub-routine for extracting \api{}} We extract traits, types, and methods as outlined in~\autoref{alg:extract}. For method \api{}s, although the input path $\Pi$ identifies the type with which these methods are associated, it cannot be used directly to compute their correct import paths. This limitation arises from Rust’s trait system: the use of a trait method is only permitted when the defining trait $\mathbb{T}$ is in scope, which requires importing $\mathbb{T}$ itself. Consequently, the appropriate import path for a method \api{} can only be determined after all traits within the library have been identified and their import paths established. To accommodate this dependency, we initially record method \api{}s with a placeholder import path corresponding to the defining trait $\mathbb{T}$, to be resolved in a later stage.

\paragraph{Final stage of revision} After constructing an initial $\itemsmap$ that may contain placeholder import paths for method \api{}s, the main procedure \textsc{ComputeItemsMap} performs a final revision pass. During this pass, each placeholder trait reference is replaced with the concrete import path computed for the corresponding trait, yielding a fully resolved $\itemsmap$ for the target library.

\begin{algorithm}
\caption{Main API Extraction Procedure}\label{alg:main-entry}
\begin{algorithmic}
\footnotesize
\Require $\elem{M}$ The root module
\Ensure $\itemsmap$ Documentation and available paths for public APIs
\Function{ExtractCrate}{$\elem{M}$}
\State $\textit{CratesMap} \gets \emptyset$
\For{$\elem{N} \in \Call{transitiveDependencies}{\elem{M}}$ in post order}
    \State $\textit{CratesMap}\left[\elem{N}\right] \gets$ \Call{ComputeItemsMap}{$\emptyset$, $\elem{N}$, $\textit{CratesMap}$}
\EndFor
\State \Return $\textit{CratesMap}\left[\elem{M}\right]$

\EndFunction
\end{algorithmic}
\end{algorithm}

\begin{algorithm}
\caption{Computing Items Map for a Module}\label{alg:extract-main}
\begin{algorithmic}[1]
\Require \\
$\elem{M}$ Input module\\
$\textit{CratesMap}$ Mapping third party crates to their items map
\Ensure $\itemsmap$ The set of public APIs

\Function{ComputeItemsMap}{$\elem{M}$, $\textit{CratesMap}$}
    \State $\itemsmap \gets $ \Call{traverseModule}{$\emptyset$, $\elem{M}$, $\textit{CratesMap}$}
    \For{$f \in \textit{functions}\left(\itemsmap\right)$}
        \State $\text{Doc}, \mathbb{T} \gets \itemsmap[f]$
        \If{$\mathbb{T} \in \textit{traits}\left(\itemsmap\right)$}
            \State $\_, p \gets \itemsmap[\mathbb{T}]$
            \State $\itemsmap[f] \gets \text{Doc}, p$
        \Else
            \State \Call{Delete}{$\itemsmap$, $f$}
        \EndIf
    \EndFor
    \State \Return $\itemsmap$
\EndFunction

\Function{traverseModule}{$\Pi$, $\elem{M}$, $\textit{CratesMap}$}

\State $\Pi^\prime \gets \Pi::\elem{M}$
\State $\itemsmap \gets \emptyset$

\For{$\rustcode{pub use}\ n::...::i \in \Call{ExternalRe-exports}{\elem{M}}$}
    \State $\text{Doc}, \_ \gets \textit{CratesMap}[\elem{N}][\elem{I}]$
    \State $\itemsmap[\elem{i}] \gets \text{Doc}, \Pi^\prime::\elem{I}$
\EndFor

\State $\textit{InteralReexported} \gets \emptyset$
\For{$\rustcode{pub use}\ m::...::i \in \Call{InternalRe-exports}{\elem{M}}$}
    \State \Call{Append}{$\textit{InteralReexported}$, $\elem{I}$}
\EndFor

\For{$\elem{I} \in \Call{PublicItems}{\elem{M}} \cup \textit{InteralReexported}$}
    \If{$\elem{I}$ is Module}
    \State $\itemsmap \gets \itemsmap\ \cup$ \Call{traverseModule}{$\Pi^\prime$, $\elem{I}$, $\textit{CratesMap}$}
\ElsIf{$\elem{I}$ is Trait}
    \State $\itemsmap \gets \itemsmap\ \cup$ \Call{extractTrait}{$\Pi^\prime$, $\elem{I}$}
\ElsIf{$\elem{I}$ is Type}
    \State $\itemsmap \gets \itemsmap\ \cup$ \Call{extractType}{$\Pi^\prime$, $\elem{I}$}
\EndIf
\EndFor

\State \Return $\itemsmap$
\EndFunction

\end{algorithmic}
\end{algorithm}

\begin{algorithm}
\caption{Extracting Functions, Types and Traits}\label{alg:extract}
\begin{algorithmic}
\Function{extractType}{$\Pi$, $\elem{T}$}
\State $\itemsmap \gets \left\{\elem{T} \mapsto \left(\text{Doc}(\elem{T}), \Pi::\elem{T}\right)\right\}$

\For{$\left(\rustcode{impl }\mathbb{T}\rustcode{ for } \elem{T} \left\{\overline{f}\right\}\right) \in \textit{ImplBlocks}(\elem{T})$}
\For{$f \in \overline{f}$}
    \State $\itemsmap[f] \gets \text{Doc}\left(f\right), \mathbb{T}$
\EndFor
\EndFor
\EndFunction

\Function{extractTrait}{$\Pi$, $\mathbb{T}$}
\State \Return $\left\{\mathbb{T} \mapsto \left(\text{Doc}(\mathbb{T}), \Pi::\mathbb{T}\right)\right\}$
\EndFunction
\end{algorithmic}
\end{algorithm}

\paragraph{\textbf{Soundness of Path Generation.}}
Our $\itemsmap$ construction enumerates candidate import paths by traversing the
public module tree and collecting all paths to reachable items, additionally
including paths introduced via \rustcode{pub use} re-exports.

\begin{lemma}[Validity of generated import paths]
Every path \(p\) produced by our $\itemsmap$ construction satisfies
\(\mathsf{Valid}(p)\).
\end{lemma}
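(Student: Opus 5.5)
We argue by structural induction on the recursion of \textsc{traverseModule}, strengthened with an invariant on the prefix $\Pi$: \emph{whenever $\textsc{traverseModule}(\Pi, \elem{M}, \textit{CratesMap})$ is invoked, $\Pi::\elem{M}$ is a path from the crate root all of whose intermediate modules are public.} The base case is the initial call from \textsc{ComputeItemsMap} with $\Pi=\emptyset$ and $\elem{M}=m_0$. Here $\Pi' = m_0$ has no intermediate modules, so the invariant holds vacuously. For the inductive step, a recursive call is made only on an $\elem{I}$ that is a module and is drawn from $\textsc{PublicItems}(\elem{M}) \cup \textit{InteralReexported}$. In that case $\elem{I}$ is public, or publicly re-exported and hence visible under the alias, so appending it to $\Pi'$ preserves the invariant.

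Given the invariant, we check each line that writes a path into $\itemsmap$.
\begin{enumerate}
\item \textsc{extractType} and \textsc{extractTrait} emit $\Pi'::\elem{T}$ or $\Pi'::\mathbb{T}$. By the invariant, every module of $\Pi'$ after $m_0$ is public, so $\mathsf{Valid}$ holds.
\item External re-exports emit $\Pi'::\elem{I}$. This has the same prefix, hence is valid; the re-export resolution definition guarantees that the alias actually denotes $\elem{I}$.
\item Method entries are handled by the revision pass in \textsc{ComputeItemsMap}. Each placeholder $\mathbb{T}$ is replaced by the path $p$ already stored for $\mathbb{T}$, which is valid by the previous two cases. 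If $\mathbb{T}$ has no stored entry, the method is deleted, so no placeholder survives.
\end{enumerate}
Since $\itemsmap$ is built only by unions of these sub-results, every path in the final map is valid.

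The main obstacle is the internal re-export case. There an item $i$ defined in a private module, such as \rustcode{signing}, is treated as local to the re-exporting module $\elem{M}$. We need to argue that the generated path $\Pi'::i$ passes only through the public chain leading to $\elem{M}$, and never through $i$'s private defining module. I would make this precise by noting that the algorithm never uses the defining path of an internally re-exported item: it uses only $\Pi'$, which is covered by the invariant.

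A similar subtlety arises when a re-exported item is itself a module, since the recursion then continues through it. Here I would appeal to Rust's rule that a \rustcode{pub use} of a module makes it publicly visible under the alias, so $\mathsf{Public}$ holds for the new path segment. If the formal $\mathsf{Public}$ predicate is read as "declared public" rather than "publicly visible", I would instead restate $\mathsf{Valid}$ in terms of visibility of each path segment and note that the two coincide for non-re-exported modules.
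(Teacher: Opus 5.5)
Your proof is correct and takes the same route as the paper's sketch. Direct paths are valid because the traversal only descends through public (or publicly re-exported) modules, re-exported paths are valid because they are emitted under a publicly reachable prefix, and your invariant on $\Pi$ simply makes this precise; you also cover the revision pass that replaces method placeholders with trait paths, which the paper's sketch leaves implicit.
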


\begin{proof}[sketch]
A path is produced in one of two ways.
(1) \emph{Direct paths:} These are obtained by traversing the module hierarchy
starting from the crate root and following only publicly visible submodules.
Thus, every prefix module on such a path is public by construction, so
\(\mathsf{Valid}(p)\) holds.
(2) \emph{Re-exported paths:} A re-export path is introduced only by a
\(\rustcode{pub use}\) declaration occurring in some publicly reachable module.
Since the re-export is itself public, the resulting alias path is externally
visible from the root, hence \(\mathsf{Valid}(p)\) holds.
\end{proof}

%% file: sections/appendix.tex
\section{Translation Validation Algorithm}

\autoref{alg:establish-roundtripping} summarises the translation procedure.

\begin{algorithm}
\caption{Establish cross-language interoperability}\label{alg:establish-roundtripping}
\begin{algorithmic}
\Require $T_{\go\golang}$, $T_{\rust\rustlang}$, and \golang values $\overline{v}$

\Ensure $T_{\mathrm{Proto}}$, $\mathcal{A}_{T_{\mathrm{Lang}}}$ and $\mathcal{A}_{T_{\mathrm{Lang}}}^{-}$ where $\mathrm{Lang} \in \left\{\go\golang, \rust\rustlang\right\}$
\Function{GenInteroperators}{$T_{\go\golang}$, $T_{\rust\rustlang}$}

\State \textbf{/*Generate a Proto schema, Go glue code and validate Go $\leftrightarrow$ Proto*/}
\While{$\mathit{true}$}
    \State $T_{\mathrm{Proto}} \gets \Call{GenProtoSchema}{T_{\go\golang}}$
    
    \State $\mathcal{A}_{T_{\go\golang}}, \mathcal{A}_{T_{\go\golang}}^{-} \gets \Call{GenHighLevelGlueCode}{T_{\go\golang},  T_{\mathrm{Proto}}}$
    
    \If{$\Call{CheckRoundtrip}{T_{\go\golang},  T_{\mathrm{Proto}}, \mathcal{A}_{T_{\go\golang}}, \mathcal{A}_{T_{\go\golang}}^{-}, \overline{v}}$}
        \State {\bf break}
    
    \EndIf
\EndWhile

\State \textbf{/*Generate Rust glue code and validate Go $\leftrightarrow$ Rust*/}
\While{$\mathit{true}$}
\State $\mathcal{A}_{T_{\rust\rustlang}}, \mathcal{A}_{T_{\rust\rustlang}}^{-} \gets \Call{GenHighLevelGlueCode}{T_{\rust\rustlang},  T_{\mathrm{Proto}}}$

\State $\mathcal{A}^{-}_{T}, \mathcal{A}_{T} \gets \mathcal{A}_{T_{\go\golang}}^{-} \circ \mathcal{A}_{T_{\rust\rustlang}}, \mathcal{A}_{T_{\rust\rustlang}}^{-} \circ \mathcal{A}_{T_{\go\golang}}$

\If{$\Call{CheckRoundtrip}{T_{\go\golang},  T_{\rust\rustlang}, \mathcal{A}_{T}, \mathcal{A}^{-}_{T}, \overline{v}}$}
    \State {\bf break}

\EndIf

\EndWhile

\EndFunction
\end{algorithmic}
\end{algorithm}